\newcommand{\hs}{\langle S \rangle}
\newcommand{\keywords}[1]{\par\addvspace\baselineskip
\noindent\keywordname\enspace\ignorespaces#1}
\begin{document}

\mainmatter  

\title{Complexity aspects \\ of the triangle path convexity}

\author{Mitre C. Dourado\inst{1}\thanks{Partially supported by CNPq and FAPERJ, Brazil.}
\and Rudini M. Sampaio\inst{2}\thanks{Partially supported by CNPq (Universal 478744/2013-7), Brazil.}}

\institute{Universidade Federal do Rio de Janeiro, Rio de Janeiro, Brazil \\
\mailmitre
\and Universidade Federal do Ceará, Fortaleza, Brazil \\
\mailrudini
}

\maketitle

\begin{abstract}
A path $P = v_1, \ldots, v_t$ is a {\em triangle path} (respectively, {\em monophonic path}) of $G$ if no edges exist joining vertices $v_i$ and $v_j$ of $P$ such that $|j - i| > 2$; (respectively, $|j - i| > 1$). A set of vertices $S$ is {\em convex} in the triangle path convexity (respectively, monophonic convexity) of $G$ if the vertices of every triangle path (respectively, monophonic path) joining two vertices of $S$ are in $S$.
The cardinality of a maximum proper convex set of $G$ is the {\em convexity number of $G$} and the cardinality of a minimum set of vertices whose convex hull is $V(G)$ is the {\em hull number of $G$}.
Our main results are polynomial time algorithms for determining the convexity number and the hull number of a graph in the triangle path convexity.
\keywords{Convexity number, graph convexity, hull number, triangle path convexity}

\end{abstract}

\section{Introduction} \label{sec:intro}

Given a finite set $X$, a family ${\cal C}$ of subsets of $X$ is a {\em convexity on $X$} if $\varnothing, X \in {\cal C}$ and ${\cal C}$ is closed under intersections~\cite{Vel1993}. 
For every set $S$, we say that $S$ is a {\em convex set of ${\cal C}$} if $S \in {\cal C}$ and a {\em concave set of ${\cal C}$} if $X \setminus S \in {\cal C}$.
The {\em convex hull of $S \subseteq X$ in ${\cal C}$}, $\langle S\rangle _{\cal C}$, is the minimum convex set of ${\cal C}$ containing $S$.
Many convexities can be defined by a $2$-interval operator. A {\em $2$-interval operator} is a function $I: X \times X \rightarrow 2^X$ satisfying $a,b \in I(a,b)$ and $I(a,b) = I(b,a)$. Then, we say that a convexity ${\cal C}$ is {\em induced by an interval operator $I$} if for every convex set $C \in {\cal C}$ and elements $u,v \in C$, it holds $I(u,v) \subseteq C$.
Hence, we can also denote the convex hull of $S$ by $\langle S\rangle _I$.
This is the case of the most well studied graph convexities, in which the interval operator is generally defined using a family of paths ${\cal P}$. More specifically, the interval operator is defined as $I(u,v) = \{ w : w$ belongs to some path of ${\cal P}$ joining $u$ to $v\}$.
Sometimes, it will be useful to know the union of the intervals of all pairs of a set. Then, we define the {\em interval of a set $S$} as $[S]_I = \underset{u,v \in S}{\bigcup} I(u,v)$ if $|S| \geq 2$; and $[S]_I = S$ otherwise. If $[S]_I = X$ we say that $S$ is an {\em interval set of ${\cal C}$} and if $\langle S\rangle _I = X$ that $S$ is a {\em hull set of ${\cal C}$}.

In this work, we present some results on the triangle path~\cite{Changat1999} convexity.
Recall that a path $P = v_1, \ldots, v_t$ is a {\em triangle  path of a graph $G$} if no edges exist in $G$ joining pairs of vertices $v_i$ and $v_j$ of $P$ such that $|j - i| > 2$. Then a set of vertices $S$ of a graph $G$ is convex in the triangle path convexity, or {\em $t$-convex}, if the vertices of every triangle path of $G$, joining two vertices of $S$, are contained in $S$.

Other well-known graph convexities are the geodetic convexity~\cite{FarberJamison1986,pelayo2013}, the monophonic convexiy~\cite{Duchet-mono,EJ1985},
and the $P_3$ convexity~\cite{upper-Radon,Henning2013},
where the set of paths considered in the definition of the interval operator are the ``geodetic paths", ``minimal paths", and ``paths of order three", respectively.
For conciseness, sometimes we will use the corresponding symbols $t$ (triangle path convexity), $m$ (monophnic convexity), $g$ (geodetic convexity), and $P_3$ ($P_3$ convexity) for indicating the associated convexity instead of its entire name. For example, we will write ``$t$-hull set" when refering to a ``hull set in the triangle path convexity". Further, we will indicate the graph for which the convexity is associated in the cases that it is not obvious. For instance, if $S$ is a set of vertices of a graph $G$ that is subgraph of $G'$, then $[S]_g^G$ is the interval of $S$ in the geodetic convexity of $G$. 

The classical convexity invariants Carathéodory, Helly, and Radon numbers have already been determined for the triangle path~\cite{Changat1999} and the monophonic~\cite{Duchet-mono,JaNo1984} convexities. Our results are concerned on the following well known graph convexity problems:

\begin{enumerate}[{\em Problem} $1:$]
  \item To decide whether a set of vertices is convex;
  \item To compute the interval of a set of vertices;
  \item To compute the convex hull of a set of vertices;
  \item To find a maximum proper convex set of a graph ({\em convexity number of the graph});
  \item To find a minimum interval set of the graph ({\em interval number of the graph}); and
  \item To find a minimum hull set of the graph ({\em hull number of the graph}).
\end{enumerate}

All these six problems have already been considered in the geodetic~\cite{Araujo2013,Dourado2009,Ekim2012,gi}, monophonic~\cite{Costa2014,DPS-mhull}, and $P_3$~\cite{Centeno2011,Rudini} convexities.

The text is organized as follows. The results contained in Sections~\ref{sec:pre} to~\ref{sec:hullnumber} are relatively to the triangle path convexity. More specifically, in Section~\ref{sec:pre} we present some definitions, notations, and observe that some results for Problems $2$ and $5$ can be obtained using known results. 
In Section~\ref{sec:convexsets}, we present a characterization of convex sets that leads to a polynomial time algorithm for solving Problems $1$ and $3$ for general graphs.
In Section~\ref{sec:convexity}, we present a polynomial time algorithm for determining the convexity number of a graph, solving Problem $4$ for general graphs. It is worth to observe that, among the considered convexities, this is the only one in which this problem can be solved in polynomial time. 
In Section~\ref{sec:hullnumber}, we present a characterization of minimum hull sets, Problem $6$, which leads to a polynomial time algorithm for finding such a set.

\section{Preliminaries} \label{sec:pre}

We begin this section giving some useful definitions.
For a natural number $k$ denote $\{1, \ldots, k\}$ by $[k]$.
We consider finite, simple, and undirected graphs. For a graph $G$, its vertex and edge sets are denoted $V(G)$ and $E(G)$, respectively, whose cardinalities are the order and the size of $G$.
For $S \subseteq V(G)$, denote by $G - S$ the subgraph obtained by the deletion of the vertices of $S$ and by $G[S]$ the {\em subgraph of $G$ induced by $S$}.
If every two vertices of $S$ are adjacent, then $S$ is a {\em clique of $G$}; and if every two vertices are not adjacent, then $S$ is an {\em independent set of $G$}. A graph is {\em bipartite} if its vertex set can be partitioned into two independent sets.
We say that $S$ is a {\em separator of $G$} if there are non-empty sets $A,B \subset V(G) \setminus S$ such that every path in $G$, between some $a \in A$ and $b \in B$, contains a vertex in $S$.
And that $S$ is a {\em minimal separator for $v,w \in V(G)$} if $S$, but no proper subset of $S$, separates $v$ and $w$ in $G$. We say that $S$ is a {\em relative minimal separator for $G$} if there are vertices $v,w \in V(G)$ such that $S$ is a minimal separator for $v$ and $w$. 
Separators that are cliques are called {\em clique separators}. 
If $S$ is a clique separator of $G$, for every connected component $H$ of $G - S$, the subgraph $G[V(H) \cup S]$ is a {\em $S$-component of $G$}.
We say that $G$ is {\em reducible} if it contains a clique separator, otherwise it is a {\em prime}.
A {\em maximal prime subgraph of $G$}, or {\em $mp$-subgraph of $G$}, is a maximal induced subgraph of $G$ that is a prime.

Consider an ordering of the $mp$-subgraphs $F_1, \ldots, F_t$ of $G$ and then define $R_i = V(F_i) \cap ( V(F_1) \cup \ldots \cup V(F_{i-1}) )$, for $i \in [t]$. This ordering of the $mp$-subgraphs is a {\em $D$-ordering} if, for all $i \in \{2, \ldots, t\}$, there is a $p < t$ with $R_t \subseteq V(F_p)$. According to Theorem~2.5 of~\cite{Leimer1993}, there is a $D$-ordering for the $mp$-subgraphs of any graph. Further, Proposition 2.4 of~\cite{Leimer1993} says that every permutation of the $mp$-subgraphs of $G$ that is a $D$-ordering has the same family of $R_i$ sets. Then, we can define ${\cal R}(G) = \{ R_2 \cup \ldots \cup R_k \}$ and $R(G) = R_2 \cup \ldots \cup R_k$, taking as basis any $D$-ordering of the $mp$-subgraphs of $G$.
The following result contains important properties of the members of ${\cal R}(G)$.

\begin{theorem} {\em \cite{Leimer1993}} \label{thm:41}
A set $C$ of vertices of a graph $G$ is a clique and a relative minimal separator for $G$ if and only if $C \in {\cal R}(G)$.
\end{theorem}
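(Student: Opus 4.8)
The plan is to prove the two implications separately, leaning on the structural properties of the $D$-ordering $F_1, \ldots, F_t$ and an induction on $t$, the number of $mp$-subgraphs. Throughout I would use two basic facts: every $mp$-subgraph is prime, hence contains no clique separator of its own, and each $R_i$ records precisely the overlap between $F_i$ and the union $V(F_1) \cup \cdots \cup V(F_{i-1})$ of the earlier subgraphs.

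For the forward implication (from $C \in {\cal R}(G)$ to $C$ being a clique and a relative minimal separator), I would fix $C = R_i$ with $i \geq 2$ and use the $D$-ordering property to obtain an index $p < i$ with $R_i \subseteq V(F_p)$. First I would argue that $R_i$ is a clique: since the subgraphs are glued along the $R_i$ sets and the whole construction is driven by clique separators, each newly attached interface must induce a complete subgraph, a claim I would make precise by induction, showing at each stage that the current interface is a clique separator of the graph built so far. Then, to see that $R_i$ is a relative minimal separator, I would pick $a \in V(F_i) \setminus R_i$ and $b \in V(F_p) \setminus R_i$ and verify that every $a$--$b$ path must meet $R_i$, so that $R_i$ separates $a$ and $b$, while no proper subset of $R_i$ does; the minimality would follow from the fact that $F_i$ and $F_p$ are prime and attach along all of $R_i$.

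For the converse (from $C$ a clique minimal separator to $C \in {\cal R}(G)$), suppose $C$ is a clique and a minimal $v,w$-separator. Because $C$ is itself a clique separator of $G$, the decomposition of $G$ by clique separators refines the split induced by $C$, and the goal is to show that $C$ surfaces undivided as one of the attachment sets $R_i$. Here the minimality of $C$ is the lever: if $C$ equalled no $R_i$, then the interface between the relevant parts of the decomposition would be a proper subset of $C$ already separating $v$ from $w$, contradicting minimality. Combined with the cited fact that every $D$-ordering yields the same family ${\cal R}(G)$, this would pin $C$ down as a member of ${\cal R}(G)$, independently of the chosen ordering.

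The main obstacle I anticipate is the converse direction, specifically showing that a minimal clique separator is not split across several $mp$-subgraphs but appears intact as a single $R_i$. This requires the precise interplay between minimality of the separator and primeness of the $mp$-subgraphs, and it is exactly the point where Leimer's decomposition machinery does the heavy lifting; rather than re-deriving that machinery, I would invoke the uniqueness of the $mp$-decomposition and the structure of the associated clique-separator tree to carry the argument through.
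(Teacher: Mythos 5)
The paper does not actually prove this statement: it is imported wholesale from Leimer's paper (presumably Theorem 4.1 of \cite{Leimer1993}, whence the label), so there is no in-paper argument to compare yours against. Judged as a standalone proof, your sketch has real gaps at precisely the points where work is required. In the forward direction, the claim that each $R_i$ is a clique is justified by appeal to ``the whole construction is driven by clique separators,'' but in this paper the $mp$-subgraphs are defined as maximal prime \emph{induced} subgraphs, not as the output of a decomposition procedure; the fact that the attachment sets of a $D$-ordering are cliques is itself one of the things Leimer proves, not something readable off the definition, so your induction is asserting rather than establishing the inductive step. Likewise, the claim that every $a$--$b$ path with $a \in V(F_i)\setminus R_i$ and $b \in V(F_p)\setminus R_i$ must meet $R_i$ needs the global structure of the decomposition: a priori such a path could detour through other $mp$-subgraphs, and ruling this out is again the substance of Leimer's analysis. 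The one piece that does follow cleanly from primeness is minimality: for each $c\in R_i$, the set $R_i\setminus\{c\}$ is a clique and hence not a separator of the prime graphs $F_i$ and $F_p$, so one can route an $a$--$b$ path through $c$ alone, showing no proper subset of $R_i$ separates $a$ from $b$.

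In the converse direction you explicitly defer to ``Leimer's decomposition machinery'' and ``the uniqueness of the $mp$-decomposition,'' which is to invoke essentially the result being proved. That is a perfectly reasonable thing to do in the context of this paper --- the authors do exactly that by citing \cite{Leimer1993} and moving on --- but it means your text is a citation dressed as a proof rather than an independent argument. If the goal were genuinely to reprove the theorem, the missing ingredients are: (i) that decomposition by clique minimal separators terminates in exactly the maximal prime induced subgraphs; and (ii) that in any $D$-ordering each $R_i$ lies inside a single earlier $V(F_p)$ and is a minimal separator between $V(F_i)\setminus R_i$ and $V(F_p)\setminus R_i$. Both take several pages in \cite{Leimer1993}, and neither is recovered by the sketch as written.
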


Observe that the convex hull of a set can be obtained by applying the interval operator that induces the considered convexity. The procedure consists simply of testing if $S$ is a convex set and, for the negative answer, since there exists some pair of elements $u,v \in S$ such that its inverval is not contained in $S$, redefine $S$ as $S \cup I(u,v)$ and reapply until a convex set be obtained. This process will eventually converge because we are considering the ground set is finite.

Since every minimum path is an induced path, and the latter is also a triangle path, the following relation holds for every set of vertices $S$ of a graph
$[S]_g \subseteq [S]_m \subseteq [S]_t$ and, consequently, $\langle S \rangle_g \subseteq \langle S \rangle_m \subseteq \langle S \rangle_t$, where $g,m,$ and $t$ stand for the interval operators inducing the geodetic, monophonic, and triangle path convexities, respectively. 
Further, note that the monophonic and the triangle path convexites coincide on bipartite graphs.
This fact allows us to use known results to conclude that in the triangle path convexity Problems $2$ and~$5$ are NP-complete even for bipartite graphs.

\begin{theorem} {\em ~\cite{Costa2014}}
Let $u,v,w$ be vertices of a bipartite graph $G$. The problem of dedicing if $w$ belongs to the interval of $\{u,v\}$ in the monophonic convexity is NP-complete.
\end{theorem}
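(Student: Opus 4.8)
The plan is to prove the two directions of NP-completeness separately. Membership in NP is the easy part. Recall from the definitions that a monophonic path is exactly a chordless (induced) path, so a vertex $w$ lies in $[\{u,v\}]_m$ if and only if there is an induced $u$-$v$ path of $G$ containing $w$. A certificate is therefore such a path $P$; to verify it in polynomial time I would check that $P$ is a $u$-$v$ path with $w \in V(P)$ and that $P$ is chordless, the latter by testing every pair of non-consecutive vertices of $P$ for adjacency. Since $|V(P)| \le |V(G)|$, this is clearly polynomial.

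For NP-hardness I would reduce from \textsc{3-Sat}, taking care that the graph produced is bipartite (so that the result applies to the restricted class as stated, which also transfers the hardness to the triangle path convexity via the coincidence of the two convexities on bipartite graphs). Given a formula $\phi$ with variables $x_1,\dots,x_n$ and clauses $C_1,\dots,C_m$, I would build a bipartite graph $G_\phi$ with designated vertices $u,v,w$ so that $w \in [\{u,v\}]_m^{G_\phi}$ if and only if $\phi$ is satisfiable. The core idea is the standard one for induced-path problems: a spine is assembled from variable gadgets $X_1,\dots,X_n$ in series, arranged so that any induced $u$-$v$ path must thread through them in order and so that $w$ is forced onto the spine, say by splitting it as $u \leadsto w \leadsto v$ with $w$ a cut vertex. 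Each $X_i$ offers two internally disjoint branches whose internal vertices are labelled by the literals $x_i$ and $\overline{x_i}$, so that choosing a branch encodes a truth value. For each clause $C_j$ I would attach a clause gadget adjacent to the literal-vertices occurring in $C_j$, calibrating lengths by subdividing edges, so that including the intended induced path forces a chord precisely when all three literals of $C_j$ are falsified by the branch choices. Consequently an induced $u$-$v$ path through $w$ exists exactly when some assignment satisfies every clause.

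The hard part will be making the gadgets simultaneously bipartite and ``chord-faithful''. Bipartiteness forbids odd cycles, so every gadget cycle and every literal-to-clause attachment must be length-balanced; this forces a careful use of subdivisions to correct parities while preserving the key invariant that the \emph{only} possible chords of a candidate path are the intended clause-violation chords. Equally delicate is ruling out unintended induced paths that could shortcut the spine, skip a variable gadget, or bypass $w$ altogether. The bulk of the technical work therefore lies in verifying both directions of correctness: that every satisfying assignment yields a genuinely chordless $u$-$v$ path through $w$, and conversely that every such path traverses each variable gadget through exactly one complete branch (never ``partially'') and never accidentally absorbs a clause vertex, so that it reads off a consistent assignment satisfying all clauses.
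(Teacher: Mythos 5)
First, note that the paper does not prove this statement at all: it is imported verbatim from the cited reference (Costa, Dourado, Sampaio, \emph{Inapproximability results related to monophonic convexity}), so there is no in-paper argument to compare yours against. Your membership-in-NP paragraph is fine and complete: an induced $u$-$v$ path through $w$ is a polynomially checkable certificate.

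The hardness half, however, is not a proof but a declaration of intent. You describe the standard template for induced-path hardness reductions (a spine of variable gadgets with two branches each, clause gadgets attached to literal vertices, parity-correcting subdivisions for bipartiteness), and then you explicitly defer every step that actually carries the argument: the concrete gadget construction, the proof that any induced $u$-$v$ path must traverse each variable gadget through exactly one complete branch, the proof that $w$ cannot be bypassed, and the proof that the only possible chords are the intended clause-violation chords. These are precisely the points where such reductions typically break --- for instance, an induced path can often ``absorb'' a clause vertex as an interior vertex and thereby legitimately avoid the chord you intended to force, or it can enter and leave a variable gadget through an unintended pair of attachment points once clause edges are added. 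Until you exhibit explicit gadgets and verify both directions of the equivalence $w \in [\{u,v\}]_m^{G_\phi} \Leftrightarrow \phi$ is satisfiable, the reduction does not exist; as written, the proposal identifies the right strategy but leaves the entire technical content as a gap.
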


Then, the decision version of Problem 2 is NP-complete in the triangle path convexity even for bipartite graphs and sets of size two.

\begin{corollary} \label{cor:problem2}
Let $u,v,w$ be vertices of a bipartite graph $G$. The problem of dedicing if $w$ belongs to the interval of $\{u,v\}$ in the triangle path convexity is NP-complete.
\end{corollary}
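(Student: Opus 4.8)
The plan is to derive the corollary directly from the preceding theorem of \cite{Costa2014} by exploiting the fact, noted just above in the excerpt, that the monophonic and triangle path convexities coincide on bipartite graphs. Concretely, I would show that for every bipartite graph $G$ and every pair of vertices $u,v$ the two intervals agree, that is, $[\{u,v\}]_m = [\{u,v\}]_t$. Once this equality is in hand, the corollary follows by the identity reduction, since on bipartite inputs the triangle path interval membership question is literally the same decision problem as the monophonic one.

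For membership in NP, I would observe that $w \in [\{u,v\}]_t$ precisely when there is a triangle path joining $u$ and $v$ that passes through $w$. Such a path has at most $|V(G)|$ vertices, so it is a polynomial-size certificate, and verifying that a given vertex sequence is a triangle path with endpoints $u,v$ containing $w$ only requires checking, for each pair $v_i,v_j$ with $|j-i|>2$, whether $v_iv_j \in E(G)$, which is clearly polynomial.

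The heart of the argument is the coincidence of the two path families on bipartite graphs. Every monophonic (induced) path is a triangle path, so the inclusion $[\{u,v\}]_m \subseteq [\{u,v\}]_t$ holds in any graph. For the reverse inclusion on a bipartite graph, I would take any triangle path $P = v_1,\ldots,v_t$ and recall that the only chords it may contain are of the form $v_iv_{i+2}$; but such an edge together with the path edges $v_iv_{i+1}$ and $v_{i+1}v_{i+2}$ would form a triangle, which a bipartite graph cannot contain. Hence $P$ has no chord at all, so it is an induced path and therefore a monophonic path. This gives $[\{u,v\}]_t \subseteq [\{u,v\}]_m$, and thus the desired equality.

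Combining the equality of the intervals with the theorem of \cite{Costa2014} then yields NP-hardness, completing the proof together with the NP membership above. Since no genuine reduction is constructed, there is no real obstacle in the argument; the only point requiring care is the triangle-free observation establishing that, in the bipartite setting, every triangle path collapses to an induced path.
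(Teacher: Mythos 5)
Your proposal is correct and follows exactly the route the paper intends: the paper justifies the corollary solely by the remark that the monophonic and triangle path convexities coincide on bipartite graphs, which is precisely the triangle-free/chord argument you spell out, combined with the theorem of \cite{Costa2014}. You actually supply more detail than the paper (the explicit verification that a chord $v_iv_{i+2}$ would create a triangle, and the NP-membership certificate), but the underlying argument is the same.
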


Corollary~\ref{cor:problem2} has as consequence that the algorithm proposed above for computing the convex hull of a set using the interval operator, can not be used efficiently for the triangle path convexity, unless $P = NP$, as can be done for other convexities, like geodetic and $P_3$ convexities, in which the interval of a set can be computed in polynomial time even for general graphs. However, in the next section, we show how to compute the convex hull of a set in the triangle path convexity in polynomial time.

\begin{theorem} {\em ~\cite{Costa2014}}
Let $G$ be a bipartite graph. The problem of dedicing if there exists a set $S$ such that $[S]_m = V(G)$ for $|S| \leq 2$ is NP-complete.
\end{theorem}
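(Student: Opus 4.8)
The plan is to prove membership in NP first and then NP-hardness by reduction from the monophonic membership problem of the preceding theorem. Membership in NP requires a little care, since computing $[S]_m$ is itself hard, so I would supply an explicit certificate: the pair $S=\{u,v\}$ together with, for every vertex $x\in V(G)$, one induced $u$-$v$ path $P_x$ containing $x$. This certificate has size $O(n^2)$; for each $P_x$ one checks in polynomial time that it is an induced path with endpoints $u,v$ that passes through $x$, and the whole family witnesses $[\{u,v\}]_m=V(G)$. (A singleton $S$ can be an interval set only when $|V(G)|=1$, which is handled separately.)

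For hardness I would start from a membership instance $(G,u,v,w)$. If $uv\in E(G)$ then every $u$-$v$ path of length at least two carries the chord $uv$, so $w\notin[\{u,v\}]_m$ and the instance is trivial; hence assume $uv\notin E(G)$. The first ingredient is a forcing gadget: attach a pendant vertex $s$ to $u$ and a pendant vertex $t$ to $v$, obtaining a bipartite graph $G'$. As leaves, $s$ and $t$ are simplicial, so each can occur in a monophonic interval only as an endpoint; since both must be covered, every size-$2$ interval set of $G'$ must equal $\{s,t\}$. Because $s,t$ are pendants, the induced $s$-$t$ paths of $G'$ are exactly the induced $u$-$v$ paths of $G$ prefixed by $su$ and suffixed by $vt$, whence $[\{s,t\}]_m^{G'}=\{s,t\}\cup[\{u,v\}]_m^{G}$. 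Consequently $G'$ admits a size-$2$ interval set if and only if every vertex of $G$ lies on some induced $u$-$v$ path, i.e. if and only if $[\{u,v\}]_m^{G}=V(G)$.

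This reduces the existence problem to a \emph{global} coverage condition, and the main obstacle is to bridge the gap between this condition and the \emph{local} membership condition $w\in[\{u,v\}]_m^{G}$ that is known to be hard. I would close the gap by augmenting $G$ so that every vertex other than the single critical vertex $w$ is \emph{unconditionally} placed on some induced $u$-$v$ path, while $w$ is coverable precisely when the original membership holds; alternatively, one may simply invoke the structure of the gadget used to prove the preceding theorem, whose non-critical vertices are already coverable by design, in which case the same construction yields both results. The delicate step—the one I expect to be hardest—is to perform this padding without creating \emph{spurious} induced $u$-$v$ paths through $w$: any fresh vertex made adjacent to an original vertex reopens routes through the graph and may place $w$ on a new induced path with no counterpart in $G$, which would wreck the converse direction. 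Controlling these chords so that each added path covers only its intended vertex and never re-routes through $w$ is the crux; once this is secured, both directions of the biconditional (a membership witness yields full coverage, and full coverage forces $w$ to be covered and hence forces membership) follow routinely.
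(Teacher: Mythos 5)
The paper does not prove this statement at all: it is quoted verbatim from~\cite{Costa2014} and used only to derive Corollary~\ref{cor:problem5}, so there is no in-paper proof to compare your attempt against. Judged on its own terms, your proposal is incomplete. The NP-membership certificate (the pair $\{u,v\}$ plus one induced $u$--$v$ path per vertex) is fine, and the pendant-vertex forcing gadget correctly shows that any size-two interval set of $G'$ must be $\{s,t\}$ and that $[\{s,t\}]_m^{G'}=\{s,t\}\cup[\{u,v\}]_m^{G}$. But this only reduces the target problem to the question ``is $[\{u,v\}]_m^{G}=V(G)$?'', which is a \emph{different} problem from the membership question ``is $w\in[\{u,v\}]_m^{G}$?'' that the preceding theorem establishes as hard. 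The entire burden of the hardness proof lies in bridging that gap, and you explicitly leave it open: you state that one must pad $G$ so that every vertex except $w$ is unconditionally covered by an induced $u$--$v$ path without creating spurious induced paths through $w$, you call this ``the crux'' and ``the one I expect to be hardest,'' and you do not carry it out. The fallback of ``invoking the structure of the gadget used to prove the preceding theorem'' is not available in a blind proof and is not verified here; whether the non-critical vertices of that gadget are already coverable by design is precisely the claim that needs an argument. A further unaddressed point is that any padding must preserve bipartiteness (so every added covering path must have the right parity between its attachment points), and must not itself become a new obstruction to coverage. Without an explicit construction and a verification of both directions on it, the reduction is a plan rather than a proof.
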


Then, the decision version of Problem 5 is NP-complete in the triangle path convexity even for bipartite graphs and any fixed integer greater than or equal to two.

\begin{corollary} \label{cor:problem5}
Let $G$ be a bipartite graph. The problem of dedicing if there exists a set $S$ such that $[S]_t = V(G)$ for $|S| \leq 2$ is NP-complete.
\end{corollary}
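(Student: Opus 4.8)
The plan is to obtain the statement as an immediate transfer of the preceding theorem, exploiting the fact, already noted in the text, that the monophonic and triangle path convexities coincide on bipartite graphs. Concretely, I would first prove that $[S]_t = [S]_m$ for every set $S$ of vertices of any bipartite graph $G$, and then argue that, restricted to bipartite inputs, the decision problem in the statement is literally the same problem as the one shown to be NP-complete for the monophonic convexity, so that both NP-hardness and membership in NP carry over.

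The core step is to check that the two families of paths defining these operators coincide on a bipartite graph. Every monophonic path is a triangle path by definition, which already gives the inclusion $[S]_m \subseteq [S]_t$ in general. For the reverse inclusion on bipartite graphs, let $P = v_1, \ldots, v_t$ be a triangle path that is not monophonic; then $P$ has a chord $v_i v_j$, and since chords with $|j - i| > 2$ are forbidden in a triangle path, necessarily $|j - i| = 2$. But $v_i v_{i+1}$ and $v_{i+1} v_{i+2}$ are edges of $P$, so together with $v_i v_{i+2}$ they would form a triangle in $G$, contradicting that $G$ is bipartite and hence triangle-free. Therefore every triangle path of a bipartite graph is an induced (monophonic) path, the two families of paths coincide, and consequently $[S]_t = [S]_m$ for every $S \subseteq V(G)$.

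With this coincidence in hand, the conclusion is essentially automatic: for bipartite $G$, asking whether some $S$ with $|S| \leq 2$ satisfies $[S]_t = V(G)$ is exactly asking whether some such $S$ satisfies $[S]_m = V(G)$, with identical yes/no answers, so NP-hardness transfers directly from the monophonic theorem. I do not anticipate a genuine obstacle; the only point requiring some care is membership in NP, since by Corollary~\ref{cor:problem2} merely computing $[S]_t$ is itself NP-hard. This is handled by observing that the same polynomial-size certificate used for the monophonic problem works here, namely the set $S$ together with, for each vertex $w \in V(G)$, a triangle (equivalently monophonic) path through $w$ joining the two vertices of $S$; verifying that each such path is indeed a triangle path is polynomial. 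Hence membership in NP also transfers, and the problem is NP-complete, justifying the statement as a corollary rather than a self-contained reduction.
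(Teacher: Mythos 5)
Your proposal is correct and takes essentially the same route as the paper: the paper derives this corollary directly from its earlier observation that the monophonic and triangle path convexities coincide on bipartite graphs, which is precisely the coincidence you establish via triangle-freeness. You merely make explicit (and correctly so) the chord argument and the NP-membership transfer that the paper leaves implicit.
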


\section{Convex sets and convex hulls} \label{sec:convexsets}

This section has two parts, one concerned to general graphs and other to prime graphs. The first one contains characterization of $t$-convex sets which leads to polynomial time algorithms for Problems $1$ and~$3$ defined in Section~\ref{sec:intro}. In the second one, these results are adapted for prime graphs and will be very useful in next sections.

\subsection{General graphs}

We begin presenting a characterization of $t$-convex sets.

\begin{theorem} \label{thm:tp-convexset}
A set of vertices $S$ of a graph $G$ is $t$-convex if and only if
there is no vertex outside of $S$ having two neighbours in $S$ and
there are no two non-adjacent vertices of $S$ having neighbours in a same connected component of $G - S$. 
\end{theorem}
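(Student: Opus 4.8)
The plan is to prove both implications, handling the ``only if'' direction by contraposition so that in each case I exhibit an explicit triangle path witnessing the failure of $t$-convexity, and the ``if'' direction by a direct case analysis on a path that would violate convexity.

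For the ``only if'' direction, suppose one of the two conditions fails. If the first fails, there is a vertex $w \notin S$ with two distinct neighbours $u, v \in S$; then $u, w, v$ is a path on three vertices, which is vacuously a triangle path since no two of its vertices are separated by more than two positions, and it joins two vertices of $S$ while leaving $w$ outside $S$, so $S$ is not $t$-convex. If instead the second condition fails, pick non-adjacent $u, v \in S$ with neighbours in a common component $H$ of $G - S$. The naive idea --- concatenate a shortest path of $H$ with the two pendant edges to $u$ and $v$ --- need not yield a triangle path, because $u$ or $v$ might be adjacent to a vertex of $H$ lying far along that path. The fix, which I expect to be the main subtlety of this direction, is to instead take a shortest $u$--$v$ path $Q$ in the \emph{induced} subgraph $G[\{u,v\} \cup V(H)]$; such a path exists because $H$ is connected and joins the two neighbours, it is chordless (a geodesic is induced) and hence a triangle path in $G$, and since $u,v$ are non-adjacent it has an interior vertex, necessarily in $H \subseteq V(G) \setminus S$. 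This again contradicts $t$-convexity.

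For the ``if'' direction I would argue directly. Assume both conditions hold and suppose, for contradiction, that some triangle path $P = v_1, \ldots, v_t$ with $v_1, v_t \in S$ has a vertex outside $S$. Then $P$ contains a maximal run $v_{a+1}, \ldots, v_{b-1}$ of vertices outside $S$ flanked by $v_a, v_b \in S$, where $b \geq a+2$. If $b = a+2$, the single vertex $v_{a+1} \notin S$ has the two neighbours $v_a, v_b \in S$, contradicting the first condition. If $b \geq a+3$, the run has at least two vertices, which form a connected subpath of $G - S$ and therefore lie in a single component $H$; moreover $v_a$ is adjacent to $v_{a+1} \in H$ and $v_b$ to $v_{b-1} \in H$. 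Now either $v_a, v_b$ are non-adjacent, contradicting the second condition, or they are adjacent, in which case the chord $v_a v_b$ joins positions whose gap $b - a \geq 3$ exceeds $2$, contradicting the triangle-path property of $P$. Every case yields a contradiction, so $S$ is $t$-convex.

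The routine points --- that a geodesic is chordless, that a chordless path is a triangle path, and that a run of non-$S$ vertices stays within one component of $G - S$ --- I would state briefly. The only genuinely delicate step is the construction of the witnessing triangle path in the second half of the ``only if'' direction, where passing to the induced subgraph $G[\{u,v\}\cup V(H)]$ is precisely what guarantees chordlessness and hence the triangle-path property.
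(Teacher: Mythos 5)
Your proof is correct and follows essentially the same route as the paper's: contraposition in both directions, a three-vertex path witnessing a violation of the first condition, an induced path through a component of $G-S$ witnessing a violation of the second, and, conversely, extraction of a run of non-$S$ vertices from a bad triangle path. In fact your handling of the second witness is slightly more careful than the paper's own argument, which takes a shortest $u'$--$v'$ path inside the component and asserts that prepending $u$ and appending $v$ yields an induced path --- a step that can fail when $u$ or $v$ has further neighbours along that path --- whereas your passage to a shortest $u$--$v$ path in $G[\{u,v\}\cup V(H)]$ repairs this (the same device the paper itself uses later in the proof of Theorem~\ref{thm:alg_mconvexhull}).
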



We observe that the above result can be rewritten using characterizations of $m$-convex and $P_3$-convex sets. We recall the characterizations of such convex sets below.

\begin{theorem} {\em \cite{DPS-mhull}} \label{thm:m-convexset}
A set of vertices $S$ of a graph $G$ is $m$-convex if and only if there are no two non-adjacent vertices of $S$ having neighbours in a same connected component of $G - S$. Further, one can decide in $O(nm)$ if $S$ is $m$-convex.
\end{theorem}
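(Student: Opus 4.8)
The plan is to prove the biconditional by contraposition in both directions, exploiting that a monophonic path is precisely an induced (chordless) path, and then to describe an $O(nm)$ decision procedure. Throughout write $n = |V(G)|$, $m = |E(G)|$, and for a component $H$ of $G - S$ let $A_H = \{s \in S : s \text{ has a neighbour in } H\}$; the condition in the statement asserts exactly that every $A_H$ is a clique.

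For the first direction I would assume the condition fails, so there are a component $H$ of $G - S$ and two non-adjacent vertices $u,v \in S$ with $u$ adjacent to some $a \in V(H)$ and $v$ adjacent to some $b \in V(H)$, and exhibit a monophonic path from $u$ to $v$ that leaves $S$. Consider the induced subgraph $G' = G[\{u,v\} \cup V(H)]$. Since $H$ is connected, $a$ and $b$ lie in a common component of $G'$, so $u, a, \dots, b, v$ is a walk and $u,v$ are connected in $G'$; take a shortest $u$--$v$ path $Q$ in $G'$. A shortest path is chordless in $G'$, and because $G'$ is an induced subgraph every edge of $G$ between two vertices of $Q$ already belongs to $G'$; hence $Q$ is chordless in $G$, i.e. a monophonic path. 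As $u \not\sim v$, the path $Q$ has an internal vertex, and that vertex lies in $V(H) \subseteq V(G) \setminus S$, so $Q$ witnesses that $S$ is not $m$-convex.

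For the converse I would assume $S$ is not $m$-convex and recover the forbidden configuration. Pick a monophonic path $P = v_1, \dots, v_t$ with $v_1, v_t \in S$ and at least one internal vertex outside $S$, and let $v_i, \dots, v_j$ be a maximal run of consecutive vertices of $P$ lying outside $S$; then $2 \le i \le j \le t-1$, while $v_{i-1}, v_{j+1} \in S$ by maximality. The vertices $v_i, \dots, v_j$ induce a connected subpath of $G - S$, hence lie in a single component $H$, and $v_{i-1} \sim v_i$, $v_{j+1} \sim v_j$ exhibit neighbours of $v_{i-1}$ and $v_{j+1}$ inside $H$. Finally $(j+1)-(i-1) = j-i+2 \ge 2$, so by the induced-ness of $P$ the vertices $v_{i-1}$ and $v_{j+1}$ are non-adjacent; they form the required pair, and the condition fails.

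For the algorithm I would first compute the components of $G - S$ in $O(n+m)$ time and, by scanning the edges between $S$ and $G - S$, build each list $A_H$; the total size $\sum_H |A_H|$ is bounded by the number of such cross edges, hence by $m$. It then remains to test whether every $A_H$ is a clique. Rather than comparing all pairs inside each $A_H$, for each $u \in S$ I would mark its closed neighbourhood $N[u]$ in $O(\deg u)$ time and, for every component $H$ with $u \in A_H$, scan $A_H$ and report failure as soon as some member is unmarked (such a member is a non-adjacent mate of $u$ adjacent to the same component), unmarking $N[u]$ afterwards. The work for a fixed $u$ is $O\!\left(\deg u + \sum_{H : u \in A_H} |A_H|\right) = O(\deg u + m)$, so summing over the at most $n$ vertices of $S$ gives $O(nm)$. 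The main delicacy is precisely this complexity bookkeeping---keeping the clique test linear per vertex instead of quadratic per component---together with the observation that a shortest path of an induced subgraph is chordless in the whole graph, which is what makes the path constructed in the first direction genuinely monophonic.
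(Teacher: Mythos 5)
Your proposal is correct, but there is nothing in this paper to compare it against: Theorem~\ref{thm:m-convexset} is quoted from~\cite{DPS-mhull} and the paper gives no proof of it (its appendix only proves the triangle-path analogue, Theorem~\ref{thm:tp-convexset}, and some later corollaries). Judged on its own, your argument is a complete and valid proof of the cited result: the forward direction via a shortest $u$--$v$ path in $G[\{u,v\} \cup V(H)]$, which is chordless in $G$ because shortest paths are induced and $G[\{u,v\} \cup V(H)]$ is an induced subgraph; the converse via a maximal run of consecutive non-$S$ vertices on an offending monophonic path, whose two $S$-endpoints are non-adjacent by chordlessness; and a correct $O(nm)$ clique test on the sets $A_H$. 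It is worth noting that your construction is in fact slightly more careful than the paper's own appendix proof of the analogous Theorem~\ref{thm:tp-convexset}: there the authors concatenate $uPv$ from a shortest path $P$ inside the component and assert it is induced, which silently assumes $u$ and $v$ each have a unique neighbour on $P$; taking the shortest path in the induced subgraph $G[\{u,v\} \cup V(H)]$, as you do, removes that gap. Likewise, the paper nowhere justifies the $O(nm)$ claim for this statement, whereas your marking scheme (charging $O(\deg u + \sum_{H : u \in A_H} |A_H|)$ per vertex of $S$, with $\sum_H |A_H| \le m$) supplies that missing analysis.
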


It is clear from the definition that a set of vertices $S$ of a graph $G$ is $P_3$-convex if and only if there is no vertex outside of $S$ having two neighbours in $S$. Further, one can decide in $O(n^2)$ if $S$ is $P_3$-convex.
Using this observation and the last two results we have a characterization of $t$-convex sets which allows the recognition of such sets in polynomial time.

\begin{corollary} \label{cor:tconvexset}
A set of vertices $S$ of a graph $G$ is $t$-convex if and only if $S$ is $m$-convex and $P_3$-convex. Further, one can decide in $O(nm)$ if $S$ is $t$-convex.
\end{corollary}

Now, we show how to use this result for computing the $t$-convex hull of a set in polynomial time.

\begin{theorem} \label{thm:alg_mconvexhull}
Let $G$ be a graph and $S \subseteq V(G)$. The $t$-convex hull of $S$ can be computed in $O(n^2m)$ steps.
\end{theorem}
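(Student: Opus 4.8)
The plan is to compute $\langle S \rangle_t$ by the standard closure process of repeatedly adding forced vertices, but with one crucial modification: since deciding membership in a single interval is already NP-complete by Corollary~\ref{cor:problem2}, I cannot afford to evaluate the interval operator. Instead I will drive the iteration using the \emph{characterization} of $t$-convex sets from Corollary~\ref{cor:tconvexset}, which is testable in $O(nm)$ time. Starting from $S_0 = S$, at each step I test whether the current set $S_i$ is $t$-convex; if it is, I stop and return $S_i$, and if it is not, the test itself exhibits a violation of either $P_3$-convexity or $m$-convexity, which I repair by adding a small set of vertices that are provably contained in $\langle S \rangle_t$.

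There are exactly two kinds of violation to handle. First, if $S_i$ fails to be $P_3$-convex, there is a vertex $w \notin S_i$ with two neighbours $a,b \in S_i$; then $a,w,b$ is a triangle path of $G$ (it has only three vertices, so the forbidden-chord condition is vacuous), so $w$ must lie in $\langle S \rangle_t$, and I add $w$ to $S_i$. Second, if $S_i$ fails to be $m$-convex, there are two non-adjacent vertices $u,v \in S_i$ having neighbours in a common connected component $C$ of $G - S_i$; I compute a shortest $u$--$v$ path in $G[\{u,v\} \cup C]$ by breadth-first search. A shortest path is chordless, and since all its vertices lie in $\{u,v\} \cup C$ every potential chord would already be an edge of this induced subgraph, so the path is an induced path, hence a triangle path, of $G$; all its internal vertices lie in $C$. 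As these internal vertices lie on a triangle path joining two members of $S_i$, they belong to $\langle S \rangle_t$, and I add them to $S_i$. In both cases the repair adds at least one new vertex.

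Correctness follows from the invariant $S_i \subseteq \langle S \rangle_t$, which holds for $S_0 = S$ and is preserved by each repair, since every added vertex was shown to lie on a triangle path between two vertices of $S_i \subseteq \langle S \rangle_t$ and $\langle S \rangle_t$ is $t$-convex. When the loop halts, $S_i$ is a $t$-convex set containing $S$, so $\langle S \rangle_t \subseteq S_i$; combined with the invariant this gives $S_i = \langle S \rangle_t$. For the running time, each iteration adds at least one vertex, so there are at most $n$ iterations; each iteration runs the $O(nm)$ convexity test of Corollary~\ref{cor:tconvexset} and, in the $m$-violation case, one additional breadth-first search costing $O(n+m)$. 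The total is therefore $O(n \cdot nm) = O(n^2 m)$, as claimed.

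The main obstacle is conceptual rather than computational: one must resist the naive closure algorithm that repeatedly applies the interval operator (which is intractable here) and instead certify the forcedness of each newly added vertex directly. The key observation making this possible is that a shortest path is automatically chordless and therefore a triangle path, so a single breadth-first search suffices to produce genuinely forced vertices whenever the $m$-convexity condition is violated.
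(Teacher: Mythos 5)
Your proposal is correct and is essentially the paper's own proof: the same iteration driven by the convexity test of Corollary~\ref{cor:tconvexset}, the same two repair steps (adding a vertex with two neighbours in the current set, or the internal vertices of a shortest $u$--$v$ path in $G[\{u,v\}\cup C]$, which is induced and hence a triangle path), and the same $O(n\cdot nm)$ accounting. Your explicit invariant $S_i \subseteq \langle S\rangle_t$ just spells out the correctness argument the paper leaves implicit.
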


\begin{proof}
Using the algorithm that follows directly of Corollary~\ref{cor:tconvexset}, we test if $S$ is a $t$-convex set in time $O(nm)$.
In the affirmative case, we are done.
Otherwise, the test returns
either a vertex $v \not\in S$ such that $v$ has two neighbours in $S$
or a connected component $C$ and two vertices $u,v \in X$ such that both vertices have neighbours in $V(C)$.
In the former case, redefine $S$ as $S \cup \{v\}$.
It the latter case,
since any shortest path of an induced subgraph of $G$ is an induced path of $G$,
we look for a shortest path from $u$ to $v$ in the induced subgraph $G[V(C) \cup \{u,v\}]$.
Such induced subgraph and path $P$ can be found in time $O(n+m)$~\cite{Cormen2001}.
Then, redefine $S$ as $S \cup V(P)$.
Repeat this process until a $t$-convex set be obtained.
It is clear that the number of iterations is less than $n$. Hence, the overall complexity of this
algorithm is $O(n^2m)$.
\end{proof}

We remark that in~\cite{Changat1999} there exists a characterization of the $t$-convex hull of a set of vertices (Theorem 2.1 of~\cite{Changat1999}). However, no polynomial time algorithm to compute it seems to follow.

\subsection{Prime graphs}

The special case of prime graphs turns out interesting since its solution is easier for these problems and, as we will see in next sections, can be used for solving the general case.
Since every induced path is a triangle path, the following result also holds for the triangle path convexity.

\begin{theorem} {\em \cite{DPS-mhull}} \label{thm:hs-atom-mono}
Every pair of non-adjacent vertices is an $m$-hull set of a prime graph.
\end{theorem}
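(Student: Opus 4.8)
The plan is to establish the following stronger local statement, from which the theorem follows at once: in a prime graph $G$, every $m$-convex set that contains two non-adjacent vertices must equal $V(G)$. Since $\langle\{u,v\}\rangle_m$ is $m$-convex and contains the non-adjacent pair $u,v$, this immediately gives that $\{u,v\}$ is an $m$-hull set of $G$. As a preliminary reduction I would observe that a prime graph on at least two vertices is connected: a disconnected graph is separated by the empty set, which is vacuously a clique, so such a graph would be reducible.

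For the main argument, suppose toward a contradiction that $H$ is an $m$-convex set with non-adjacent vertices $u,v \in H$ and $H \neq V(G)$. Then $G - H$ is non-empty; pick a connected component $C$ of $G - H$ and let $N_H(C)$ denote the set of vertices of $H$ having a neighbour in $C$. By the characterization of $m$-convex sets (Theorem~\ref{thm:m-convexset}), no two \emph{non-adjacent} vertices of $H$ can have neighbours in the same component $C$; hence any two vertices of $N_H(C)$ are adjacent, i.e.\ $N_H(C)$ is a clique. Note that $N_H(C) \neq \varnothing$, since $G$ is connected and $C$ must attach to the rest of the graph.

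It remains to verify that $N_H(C)$ is a separator. Because $N_H(C)$ is a clique while $u,v$ are non-adjacent, at least one of them, say $u$, lies outside $N_H(C)$. Taking $A = V(C)$ and $B = \{u\}$, both non-empty and disjoint from $N_H(C)$, every path from a vertex of $C$ to $u$ must leave $C$, and since $C$ is a connected component of $G - H$ its only neighbours outside $C$ lie in $N_H(C)$; thus every such path meets $N_H(C)$. Hence $N_H(C)$ is a clique separator of $G$, contradicting that $G$ is prime. This completes the proof of the local statement and therefore of the theorem.

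I expect the decisive (though not technically hard) step to be the passage from the convexity hypothesis to a clique: the $m$-convexity characterization is precisely what forces the component-neighbourhood $N_H(C)$ to be a clique, and the primeness of $G$ is exactly what rules out such a clique separator. The only points requiring a line of care are that $N_H(C)$ is non-empty (handled by connectivity) and that the non-adjacent pair cannot both lie in the clique $N_H(C)$ — which is where the hypothesis that $u,v$ are non-adjacent is used. Finally, the triangle-path analogue follows verbatim, since every induced path is a triangle path, so the same set $N_H(C)$ remains a clique separator in the $t$-convexity setting as well.
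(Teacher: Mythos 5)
Your proof is correct. The paper itself gives no proof of this statement --- it is imported from \cite{DPS-mhull} as a known result --- so there is nothing in-paper to compare against, but your argument is sound and is essentially the standard one: given an $m$-convex set $H \neq V(G)$ containing a non-adjacent pair, the characterization of $m$-convex sets (Theorem~\ref{thm:m-convexset}) forces the attachment set $N_H(C)$ of any component $C$ of $G-H$ to be a clique, and since the non-adjacent pair cannot both lie in that clique, $N_H(C)$ is a clique separator, contradicting primeness. You correctly handle the two small points that need care (non-emptiness of $N_H(C)$ via connectedness of a prime graph, and the placement of $u,v$ relative to $N_H(C)$), and your closing remark that the triangle-path version follows because every induced path is a triangle path matches how the paper derives Corollary~\ref{cor:hs-atom-tp}.
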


\begin{corollary} \label{cor:hs-atom-tp}
Every pair of non-adjacent vertices is a $t$-hull set of a prime graph.
\end{corollary}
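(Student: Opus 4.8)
The plan is to derive this directly from Theorem~\ref{thm:hs-atom-mono} together with the convexity inclusion already recorded in the preliminaries. Recall that for every set of vertices $S$ we have $\langle S \rangle_m \subseteq \langle S \rangle_t$, since every induced (monophonic) path is a triangle path and hence the $t$-convex hull must absorb at least everything the $m$-convex hull absorbs. Let $G$ be a prime graph and let $\{u,v\}$ be a pair of non-adjacent vertices of $G$. By Theorem~\ref{thm:hs-atom-mono}, $\{u,v\}$ is an $m$-hull set of $G$, that is, $\langle \{u,v\} \rangle_m = V(G)$.

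The key step is then the following chain of inclusions:
\begin{equation*}
V(G) = \langle \{u,v\} \rangle_m \subseteq \langle \{u,v\} \rangle_t \subseteq V(G).
\end{equation*}
The first equality is exactly the content of Theorem~\ref{thm:hs-atom-mono}; the middle inclusion is the monotonicity relation $\langle S \rangle_m \subseteq \langle S \rangle_t$ stated in Section~\ref{sec:pre}; and the final inclusion is trivial, since any convex hull is a subset of the ground set $V(G)$. Consequently $\langle \{u,v\} \rangle_t = V(G)$, which says precisely that $\{u,v\}$ is a $t$-hull set of $G$. Since $\{u,v\}$ was an arbitrary pair of non-adjacent vertices, the corollary follows.

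In this instance there is essentially no obstacle to overcome: the whole argument is a one-line consequence of a previously stated theorem and the inclusion $\langle S \rangle_m \subseteq \langle S \rangle_t$ that was established for arbitrary $S$ in the preliminaries. The only point that deserves a word of care is to confirm that the inclusion is being applied in the correct direction---because the triangle path convexity is \emph{coarser} (its convex sets are more restrictive, its hulls larger), a monophonic hull set remains a hull set in the triangle path convexity, and not the other way around. Once that direction is fixed, the statement is immediate, which is exactly why the paper presents it as a corollary rather than a theorem with its own self-contained proof.
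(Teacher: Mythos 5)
Your proof is correct and matches the paper's own justification: the paper derives this corollary from Theorem~\ref{thm:hs-atom-mono} via exactly the observation that every induced path is a triangle path, so $\langle S \rangle_m \subseteq \langle S \rangle_t$ and an $m$-hull set remains a $t$-hull set. No issues.
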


Using Theorem~\ref{thm:tp-convexset} and Corollary~\ref{cor:hs-atom-tp} we can charactize $t$-convex sets of prime graphs in a simpler way.

\begin{theorem} \label{thm:atom-convex}
Let $G$ be a prime graph and $S \subset V(G)$. Then, $S$ is a $t$-convex set if and only if $S$ is a clique such that every vertex outside $S$ has at most one neighbour in $S$.
\end{theorem}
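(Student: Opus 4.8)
The plan is to establish both implications by combining the general characterization of $t$-convex sets in Theorem~\ref{thm:tp-convexset} with the prime-graph hull property of Corollary~\ref{cor:hs-atom-tp}. The two structural conditions---being a clique, and every outside vertex having at most one neighbour---will correspond, respectively, to the two conditions appearing in Theorem~\ref{thm:tp-convexset}, with primality supplying the extra strength needed for the clique conclusion.

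For the forward implication, assume $S$ is $t$-convex. Applying Theorem~\ref{thm:tp-convexset}, its first condition states that no vertex outside $S$ has two neighbours in $S$; this is precisely the assertion that every vertex outside $S$ has at most one neighbour in $S$. It remains to show that $S$ is a clique. I would argue by contradiction: suppose $u, v \in S$ are non-adjacent. Since $G$ is prime, Corollary~\ref{cor:hs-atom-tp} gives $\langle \{u,v\} \rangle_t = V(G)$. But $S$ is $t$-convex and contains both $u$ and $v$, so $\langle \{u,v\} \rangle_t \subseteq \langle S \rangle_t = S$, forcing $S = V(G)$ and contradicting the hypothesis $S \subset V(G)$. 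Hence no such pair exists and $S$ is a clique.

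For the converse, assume $S$ is a clique in which every vertex outside $S$ has at most one neighbour in $S$, and verify the two conditions of Theorem~\ref{thm:tp-convexset}. The hypothesis on outside vertices is exactly the first condition. For the second condition, since $S$ is a clique it contains no two non-adjacent vertices, so the requirement that no two non-adjacent vertices of $S$ have neighbours in a same connected component of $G - S$ is satisfied vacuously. Both conditions holding, $S$ is $t$-convex.

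The argument is short; the only genuinely delicate point is the clique direction, where primality is indispensable. The role of Corollary~\ref{cor:hs-atom-tp} is to turn a single non-adjacent pair inside a proper convex set into a collapse of the whole hull onto $V(G)$, which is impossible for a proper subset. Without primality this collapse need not occur, so the equivalence would fail; thus the proof hinges entirely on invoking the hull-set property at exactly this step.
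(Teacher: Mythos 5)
Your proof is correct and follows essentially the same route as the paper's: the clique property via Corollary~\ref{cor:hs-atom-tp} applied to a non-adjacent pair, the neighbour condition directly from Theorem~\ref{thm:tp-convexset}, and the converse by checking that a clique vacuously satisfies the second condition of Theorem~\ref{thm:tp-convexset}. You merely spell out in more detail the contradiction (the hull of the pair collapsing onto $V(G)$ against $S\subset V(G)$) that the paper leaves implicit.
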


\begin{proof}
Let $G$ be a prime graph and $S \subset V(G)$ a $t$-convex set. 
First, suppose that $S$ is not a clique and let $u,v \in S$ such that $uv \not\in E(G)$. Then, by Corollary~\ref{cor:hs-atom-tp}, $\{u,v\}$ is a hull set of $G$, a contradiction. Hence $S$ is a clique.
Now, if some vertex outside $S$ had two or more neighbours in $S$, by Theorem~\ref{thm:tp-convexset}, $S$ would not be a $t$-convex set. Completing the proof of the sufficiency. The necessity is direct from Theorem~\ref{thm:tp-convexset}.
\end{proof}

This result has interesting consequences.

\begin{corollary} \label{cor:atom}
If $G$ is a prime graph, then any two $t$-convex sets of $G$ share at most one vertex.
\end{corollary}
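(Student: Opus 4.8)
The plan is to reduce everything to the clique characterization of $t$-convex sets for prime graphs established in Theorem~\ref{thm:atom-convex}: a set $S \subset V(G)$ is $t$-convex precisely when $S$ is a clique and every vertex outside $S$ has at most one neighbour in $S$. I would argue by contradiction, supposing that two distinct $t$-convex sets $S_1$ and $S_2$ satisfy $|S_1 \cap S_2| \geq 2$, and derive a violation of this characterization for one of the two sets.

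Concretely, I would first fix two distinct vertices $u, v \in S_1 \cap S_2$. Since $S_1$ (and likewise $S_2$) is a clique by Theorem~\ref{thm:atom-convex}, the vertices $u$ and $v$ are adjacent. The key step is then to show that any vertex lying in one set but not the other produces a forbidden configuration. Take a vertex $w \in S_1 \setminus S_2$; such a $w$ is distinct from $u$ and $v$ because $u, v \in S_2$. As $u, v, w$ all belong to the clique $S_1$, the vertex $w$ is adjacent to both $u$ and $v$. But $w \notin S_2$ while $u, v \in S_2$, so $w$ is a vertex outside $S_2$ having two neighbours in $S_2$, contradicting the $t$-convexity of $S_2$ via Theorem~\ref{thm:atom-convex}. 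Hence $S_1 \setminus S_2 = \varnothing$, that is, $S_1 \subseteq S_2$; by the symmetric argument $S_2 \subseteq S_1$, so $S_1 = S_2$, contradicting the assumption that they are distinct.

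I expect the only point requiring care to be bookkeeping rather than any real obstacle: one must confirm that the chosen vertex $w$ in the symmetric difference is genuinely different from $u$ and $v$ so that the clique property forces the two edges $wu$ and $wv$, and one must invoke the characterization symmetrically to upgrade the containment $S_1 \subseteq S_2$ to equality. Since Theorem~\ref{thm:atom-convex} does all the structural work, the argument is short and the conclusion is in fact slightly stronger than stated — any two $t$-convex sets sharing two vertices must coincide — which immediately yields the claim that distinct $t$-convex sets share at most one vertex.
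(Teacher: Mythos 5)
Your proposal is correct and follows essentially the same route as the paper: invoke Theorem~\ref{thm:atom-convex} to get that both sets are cliques in which no outside vertex has two neighbours, then observe that a vertex $w$ in $S_1 \setminus S_2$ would be adjacent to both shared vertices $u,v \in S_2$, contradicting that characterization for $S_2$. The only cosmetic difference is that you run the argument symmetrically to conclude $S_1 = S_2$, whereas the paper picks $w$ in the symmetric difference without loss of generality; the substance is identical.
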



These two results can be used to show an upper bound on the number of big non-trivial $t$-convex sets.

\begin{corollary} \label{cor:numberprime}
If $G$ is a prime graph or order $n$, then the number of non-trivial $t$-convex sets of $G$ with at least three vertices is less then $n$.
\end{corollary}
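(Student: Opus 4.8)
The plan is to translate the two structural facts already available---Theorem~\ref{thm:atom-convex} (every $t$-convex set of size at least three is a clique in which each external vertex has at most one neighbour) and Corollary~\ref{cor:atom} (two $t$-convex sets meet in at most one vertex)---into a purely combinatorial statement about a union of cliques, and then count. Write $S_1,\dots,S_k$ for the $t$-convex sets with $|S_i|\ge 3$. By Corollary~\ref{cor:atom} no two of them share an edge, so they are edge-disjoint cliques. First I would record a sharpening of the external-neighbour condition: if $uv$ is an edge of such an $S_i$, then every common neighbour of $u$ and $v$ lies in $S_i$ (otherwise it would be an external vertex with two neighbours in $S_i$, contradicting Theorem~\ref{thm:tp-convexset}); since $S_i$ is a clique this forces $S_i=\{u,v\}\cup(N(u)\cap N(v))$. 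In particular every triangle of $G$ is contained in the unique $S_i$ lying on any of its edges, a \emph{triangle-confinement} property that will do the real work.

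Next I would pass to the graph $H=\bigcup_i S_i$ (on the vertex set $\bigcup_i V(S_i)\subseteq V(G)$, carrying all edges of all the cliques) and argue that the $S_i$ are exactly the blocks of $H$ and that they are glued together in a forest-like fashion. Concretely, the goal is to show that $H$ contains no cycle using edges of two or more distinct cliques. Granting this, each connected component of $H$ is obtained by successively attaching cliques at single cut vertices, so for every component the sizes satisfy $\sum(|S_i|-1)=(\text{number of its vertices})-1$; summing over the $r$ components gives
\[
\sum_{i=1}^{k}(|S_i|-1)=|V(H)|-r\le n-1.
\]
Since each $|S_i|\ge 3$, every summand on the left is at least $2$, whence $2k\le n-1$ and therefore $k<n$, as claimed (in fact $k\le (n-1)/2$).

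The main obstacle is precisely the excluded-configuration step, and this is where primeness is indispensable: a cyclic gluing of cliques is genuinely possible for edge-disjoint $t$-convex cliques in a non-prime graph, so the conclusion fails without the hypothesis. I would treat a shortest cycle of $H$ that uses edges from at least two cliques. Using triangle-confinement one shows that consecutive edges of this cycle lie in distinct cliques and that the cycle is chordless of length at least four: a chord, or two consecutive edges in the same clique, would create a triangle forcing two of the cliques to meet in two vertices, contradicting Corollary~\ref{cor:atom}. Finally, from such an induced cycle together with the ``apex'' vertices of the cliques it runs through, I would extract a clique that minimally separates two of its vertices, contradicting the fact---a consequence of Theorem~\ref{thm:41} together with primeness---that a prime graph admits no clique that is a relative minimal separator. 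Pinning down this last extraction, namely identifying the right clique separator from the cyclic arrangement, is the delicate part of the argument.
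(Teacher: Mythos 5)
Your proposal takes a genuinely different route from the paper: the paper argues by induction on $n$, removing one big $t$-convex set $C$, observing via Theorem~\ref{thm:atom-convex} and Corollary~\ref{cor:atom} that every other big $t$-convex set survives as a $t$-convex set of $G-C$, and applying the induction hypothesis to $G-C$. Your global block-structure count is an attractive alternative, but it has a genuine gap, and you have located it yourself: the claim that the big $t$-convex cliques cannot be glued cyclically --- i.e., that $H$ contains no cycle using edges of two or more of the $S_i$ --- is never established, and everything rests on it. Without the forest structure, the facts you do prove ($k$ cliques of size at least $3$, pairwise sharing at most one vertex, hence edge-disjoint) only give $3k\le\binom{n}{2}$, i.e.\ $k=O(n^2)$, which is far from $k<n$; a Steiner triple system shows that this abstract counting cannot be improved, so the excluded-configuration lemma is not a technical refinement but the entire content of the statement. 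The sketch you give for it (``extract a clique that minimally separates two vertices from the cyclic arrangement, contradicting primeness via Theorem~\ref{thm:41}'') does not identify which clique is the separator or which two vertices it separates. Your triangle-confinement argument does dispose of three pairwise-intersecting cliques (the three attachment vertices would form a triangle giving an external vertex two neighbours in one of the cliques), but for a cycle of $r\ge 4$ cliques the attachment vertices need not induce anything so rigid in $G$: non-consecutive attachment vertices may or may not be adjacent, and $G$ may contain many vertices and edges outside $\bigcup_i S_i$ providing alternative connections between the cliques, so a clique separator of the auxiliary graph $H$ need not be a separator of $G$ at all. That is precisely the step that must be supplied, and it is missing.

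If you can close that step, your approach would buy a quantitatively stronger conclusion, $k\le(n-1)/2$, than the stated $k<n$; as written, however, the proof is incomplete at its central point, whereas the paper's deletion-and-induction argument avoids the issue entirely by never needing any global statement about how the cliques are arranged.
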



Corollary~\ref{cor:numberprime} implies that the number of $t$-convex sets of a prime graph $G$ of order $n$ is upper bounded by $2n+1$, since every vertex is a $t$-convex set and the sets $\varnothing$ are $V(G)$ are $t$-convex.
Next, we present a characterization of $t$-convex hulls of prime graphs.

\begin{lemma} \label{lem:car-prime}
Let $S$ be a set of vertices of a prime graph $G$. Then, $\langle S \rangle_t = [S]_t = [S]_{P_3}$ or $\langle S \rangle_t = V(G)$.
\end{lemma}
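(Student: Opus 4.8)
The plan is to prove the contrapositive formulation: if $\langle S \rangle_t \neq V(G)$, then $\langle S \rangle_t = [S]_t = [S]_{P_3}$. First I would dispose of the case in which $S$ contains two non-adjacent vertices $u,v$: by Corollary~\ref{cor:hs-atom-tp} we have $V(G) = \langle \{u,v\}\rangle_t \subseteq \langle S \rangle_t$, so $\langle S \rangle_t = V(G)$ and the second alternative of the statement holds. Hence I may assume $S$ is a clique (this is also forced directly, since by Theorem~\ref{thm:atom-convex} a proper $t$-convex set such as $\langle S \rangle_t$ is a clique, and $S \subseteq \langle S \rangle_t$).

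The key preliminary step is to establish $[S]_t = [S]_{P_3}$ whenever $S$ is a clique. For this I would show that any triangle path $v_1,\ldots,v_k$ whose endpoints $v_1,v_k$ are adjacent satisfies $k \leq 3$: if $k \geq 4$ then $|k-1| > 2$, so the defining condition of a triangle path forbids the edge $v_1 v_k$, contradicting their adjacency. Since every pair $u,v \in S$ is adjacent, the only triangle paths joining them have at most three vertices, whence $I_t(u,v) = \{u,v\} \cup \{w : w$ is adjacent to both $u$ and $v\} = I_{P_3}(u,v)$; taking the union over all pairs gives $[S]_t = [S]_{P_3} =: T$. It then remains only to prove that $T$ is $t$-convex, for then $\langle S \rangle_t = T$ and the first alternative holds.

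So I would suppose, for contradiction, that $T$ is not $t$-convex; the cases $|S|\leq 1$ are immediate, since then $T = S$ is a single vertex or $\varnothing$, both $t$-convex, so I may assume $|S| \geq 2$. Because $T \subseteq \langle S \rangle_t$ and the latter is a clique, $T$ is itself a clique, and hence by Theorem~\ref{thm:atom-convex} the failure of convexity yields a vertex $z \notin T$ having two neighbours $a,b \in T$. The main obstacle is precisely here: a priori one might fear that further applications of the interval operator enlarge $T$ into a strictly larger proper convex clique, so that neither alternative of the lemma holds. The resolution exploits that $z \notin T = [S]_{P_3}$ forces $z$ to have at most one neighbour in $S$, so since $|S| \geq 2$ there is some $w \in S$ with $zw \notin E(G)$. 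On the other hand $a,b$ are adjacent ($T$ is a clique), so $a,z,b$ is a triangle path and $z \in [T]_t \subseteq \langle S \rangle_t$. Thus $\langle S \rangle_t$ contains the non-adjacent pair $z,w$, and Corollary~\ref{cor:hs-atom-tp} gives $\langle S \rangle_t = V(G)$, contradicting our assumption. Therefore $T$ is $t$-convex and $\langle S \rangle_t = [S]_t = [S]_{P_3}$, which completes the proof.
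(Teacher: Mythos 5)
Your proof is correct and follows essentially the same route as the paper's: reduce to the case where $S$ is a clique via Corollary~\ref{cor:hs-atom-tp}, identify $[S]_t$ with $[S]_{P_3}$, and show that a vertex outside this set with two neighbours in it must be non-adjacent to some vertex of $S$, forcing the hull to be $V(G)$. Your explicit justification that triangle paths between adjacent endpoints have at most three vertices, and your handling of $|S|\leq 1$, are small gains in rigor over the paper's version but do not change the argument.
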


\begin{proof}
Let $S$ be a set of vertices of a prime graph $G$. If $S$ is not a clique, by Corollary~\ref{cor:hs-atom-tp}, $\langle S \rangle_t = V(G)$. Then, assume that $S$ is a clique. Therefore, using Theorem~\ref{thm:atom-convex}, we conclude that $[S]_t = [S]_{P_3} = S \cup \{u : u \not\in S$ and $|N(u) \cap S| \geq 2 \}$. Again, if $[S]_t$ is not a clique, then $\langle S \rangle_t = V(G)$. Hence, assume that $[S]_t$ is a clique. If $[S]_t$ is $t$-convex set, we are done. Otherwise, there is a vertex $u \not\in [S]_t$ having two neighbours in $[S]_t$. This means that $u  \in \langle S \rangle_t$ and $u$ is not adjacent to some vertex of $v \in S$. Since $\{u,v\}$ is a $t$-hull set of $G$, we have $\langle S \rangle_t = V(G)$.
\end{proof}

\begin{corollary} \label{cor:alg-prime}
Let $S$ be a set of vertices of a prime graph $G$ of order $n$. Then, one can test if $S$ is $t$-convex or compute its $t$-convex hull in $O(n^2)$.
\end{corollary}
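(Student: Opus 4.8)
The plan is to read an algorithm directly off Theorem~\ref{thm:atom-convex} and Lemma~\ref{lem:car-prime} and then to account for the cost of each step. The crucial feature to exploit is that, in a prime graph, the lemma collapses the hull computation to at most a single interval-expansion step, so the whole procedure consists of a bounded number of passes over $G$, each costing $O(n^2)$.

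First I would treat the recognition problem. By Theorem~\ref{thm:atom-convex}, $S$ is $t$-convex if and only if $S$ is a clique and every vertex outside $S$ has at most one neighbour in $S$. Checking that $S$ is a clique amounts to inspecting all pairs of vertices of $S$, which costs $O(n^2)$. Checking the second condition requires, for each vertex $u \notin S$, counting its neighbours in $S$ and verifying that this count is at most one; a single scan of the adjacencies yields all these counts in $O(n^2)$. Hence the recognition test runs in $O(n^2)$.

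Next I would compute the hull by following the case analysis in the proof of Lemma~\ref{lem:car-prime}. I would first test whether $S$ is a clique; if it is not, then Corollary~\ref{cor:hs-atom-tp} gives $\langle S \rangle_t = V(G)$ and I return $V(G)$. If $S$ is a clique, I compute $[S]_t = [S]_{P_3} = S \cup \{u \notin S : |N(u)\cap S|\ge 2\}$ by counting, for each vertex, its neighbours in $S$, again in $O(n^2)$. I then test whether $[S]_t$ is a clique; if not, $\langle S \rangle_t = V(G)$. Finally, if $[S]_t$ is a clique, I apply the recognition test above to $[S]_t$: if $[S]_t$ is $t$-convex then $\langle S \rangle_t = [S]_t$, and otherwise Lemma~\ref{lem:car-prime} forces $\langle S \rangle_t = V(G)$.

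Each of these steps is a single $O(n^2)$ computation and there is a bounded number of them, so the overall cost is $O(n^2)$. The one point worth emphasising — and what makes this bound much better than the generic $O(n^2m)$ of Theorem~\ref{thm:alg_mconvexhull} — is that no iteration is needed: Lemma~\ref{lem:car-prime} guarantees that after the single expansion producing $[S]_t$ the process has either already reached a convex set or must jump straight to $V(G)$. I do not expect any genuine obstacle here, since the corollary is essentially a complexity accounting of the constructive steps already present in the proof of the lemma.
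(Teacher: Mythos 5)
Your proposal is correct and matches the paper's own proof: both read the recognition test directly off Theorem~\ref{thm:atom-convex} (clique check plus neighbour counts, each $O(n^2)$) and compute the hull by performing the single expansion $S \cup \{u \notin S : |N(u)\cap S| \geq 2\}$ and invoking the dichotomy of Lemma~\ref{lem:car-prime}. Your version merely spells out the intermediate clique checks that the paper leaves implicit; there is no substantive difference.
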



Using these results, one can enumerate all $t$-convex sets of a prime graph efficiently, as shown in the following algorithm.

\begin{algorithm}[h]

\SetKwInOut{Input}{input}

\Input{A prime graph $G$}

${\cal C} \leftarrow \{\varnothing, V(G) \} \cup \{ \{ u \} : u \in V(G)\},$

$H \leftarrow E(G)$

\While{$H \neq \varnothing$}{

    $uv \leftarrow$ some element of $H$

    $S \leftarrow \{u,v\} \cup \{w : u,v \in N(w) \}$
    
    \If{$S$ is $t$-convex} {
    
        ${\cal C} \leftarrow {\cal C} \cup S$

    }
    remove from $H$ the edges of $G[S]$
}

\Return ${\cal C}$
    
\caption{Enumerating all $t$-convex sets of a prime graph.\label{alg:convexity-atom}}

\end{algorithm}

A direct analysis of Algorithm~\ref{alg:convexity-atom} runned on a prime graph of order $n$ and size $m$ leads to a time complexity equals $O(n^2m)$, because the loop of lines 3 to 8 is repeated $m$ times, line 5 has complexity $O(n)$, and line 6 can be done in $O(n^2)$ time (Corollary~\ref{cor:alg-prime}).

\section{Convexity number} \label{sec:convexity}

In this section we present an algorithm for finding a maximum proper $t$-convex set of any graph. This is an unexpected result, since this parameter is NP-complete for general graphs in the monophonic convexity~\cite{DPS-mhull}, for bipartite graphs in the geodetic convexity~\cite{DPRS-2012}, and for split graphs in the $P_3$ convexity~\cite{Centeno2010}.
We begin showing relating $t$-convex sets and $mp$-sugraphs.
Given a $t$-convex set $S$ and an $mp$-subgraph $F$ of a graph $G$, the set $S \cap V(F)$ can be or not a $t$-convex set of $G$. The following result shows that such set is a $t$-convex set of $F$.

\begin{lemma} \label{lem:SF}
Let $S$ be a $t$-convex set of a graph $G$ and $F$ an $mp$-subgraph of $G$. Then $V(F) \cap S$ is a $t$-convex set of $F$.
\end{lemma}

\begin{proof}
Let $S,F,$ and $G$ as in the statement of the corollary.
Denote $C = V(F) \cap S$. If $C = V(F)$ or $C = \varnothing$, we are done. Then, suppose for contradiction that $\varnothing \subset C \subset V(F)$ and that $C$ is not a $t$-convex set of $F$. This implies that there is a triangle path $P$ of $F$, not contained in $C$, joining two vertices of $C$.
It is clear that $P$ is also an induced path of $G$. This implies that $S$ is not a $t$-convex set of $G$, a contradiction.  
\end{proof}

For every $mp$-subgraph of a graph, we define a function on its $t$-convex sets and show how to combine them to obtain the $t$-convexity number of the graph. In the following definitions,
denote by ${\cal F}(G)$ the family of $mp$-subgraphs of $G$ and $F \in {\cal F}(G)$.

\begin{itemize}
\item For every $C \in {\cal C}_t(F)$, denote $F(C) = C$ union the vertices of the connected components of $G - C$ not containing $V(F) \setminus C$;

\item $c(G)= {\max} \{ \max \{\{ |F(C)| : C \in {\cal C}_t(F) \setminus \{V(F)\}\} : F \in {\cal F}(G) \}\}$.
\end{itemize}

\begin{theorem} \label{thm:convexity}
The $t$-convexity number of a connected graph $G$ is $c(G)$.
\end{theorem}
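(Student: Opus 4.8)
The plan is to prove the two inequalities $c(G)\le\omega$ and $\omega\le c(G)$ separately, where $\omega$ denotes the $t$-convexity number of $G$. Throughout I would use the key simplification that, for an $mp$-subgraph $F$ and $C\in{\cal C}_t(F)\setminus\{V(F)\}$, the set $V(F)\setminus C$ lies inside a single connected component of $G-C$: since $F$ is prime and $C$ is a clique by Theorem~\ref{thm:atom-convex}, the clique $C$ cannot be a separator of $F$, so $F-C$ is connected. Writing $K$ for the component of $G-C$ containing $V(F)\setminus C$, this gives $F(C)=V(G)\setminus K$.

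For $c(G)\le\omega$ I would show that every such $F(C)$ is a proper $t$-convex set. Properness is immediate, since $K\supseteq V(F)\setminus C\ne\varnothing$. For convexity I would invoke Corollary~\ref{cor:tconvexset} and check $m$- and $P_3$-convexity of $V(G)\setminus K$. Because only vertices of the clique $C$ can have neighbours in $K$, the $m$-condition is vacuous, so the whole statement reduces to the $P_3$-condition: every vertex of $K$ has at most one neighbour in $C$. For vertices of $K\cap V(F)=V(F)\setminus C$ this is exactly Theorem~\ref{thm:atom-convex}. The interesting case is $v\in K\setminus V(F)$: assuming $v$ had two neighbours $c_1,c_2\in C$, I would let $H$ be the component of $G-V(F)$ containing $v$ and set $R=N(V(H))\cap V(F)$, which by the clique separator structure of~\cite{Leimer1993} is a clique minimal separator separating $v$ from $V(F)\setminus R$, with $N(v)\cap V(F)\subseteq R$. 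Any $r\in R\setminus C$ would be adjacent to both $c_1,c_2$, since $R$ is a clique, contradicting Theorem~\ref{thm:atom-convex} applied to $C$ in $F$; hence $R\subseteq C$. But then $C\supseteq R$ already separates $v$ from $V(F)\setminus C$, so $v\notin K$, a contradiction. This yields $|F(C)|\le\omega$ for all admissible $F,C$, hence $c(G)\le\omega$.

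For $\omega\le c(G)$ I would start from a maximum proper $t$-convex set $S$ and first argue that its complement may be taken connected: if $W_1$ is a component of $G-S$, then $V(G)\setminus W_1$ is again $t$-convex by a short verification of the two conditions of Theorem~\ref{thm:tp-convexset}, and it is at least as large, so by maximality $W:=V(G)\setminus S$ is connected. From $S$ being $t$-convex one reads off that $Q:=N(W)$ is a clique, by $m$-convexity, and that every vertex of $W$ has at most one neighbour in $Q$, by $P_3$-convexity. The crux is then to produce an $mp$-subgraph $F$ with $Q\subseteq V(F)$ and $V(F)\cap W\ne\varnothing$; granting this, $C:=V(F)\cap S\supseteq Q=N(W)$ is $t$-convex in $F$ by Lemma~\ref{lem:SF} and satisfies $C\ne V(F)$, and since $C\supseteq N(W)$ isolates the connected set $W$ in $G-C$ we get $K=W$ and therefore $F(C)=V(G)\setminus W=S$, giving $|S|=|F(C)|\le c(G)$.

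The main obstacle is this existence statement, which I would establish inside $H:=G[W\cup Q]$. I would show $H$ has no minimal clique separator contained in $Q$: such an $R\subsetneq Q$ would have to separate $W$ from $Q\setminus R$, which is impossible because every vertex of $Q=N(W)$ is adjacent to $W$ and $W$ is connected; moreover $Q$ itself is not a separator of $H$, since $H-Q=G[W]$ is connected. Consequently the $mp$-subgraph $F$ of $H$ containing the clique $Q$ must meet $W$: each minimal separator incident to it lies in $W\cup Q$ but not in $Q$, hence contains a vertex of $W$, and if $H$ is prime then $F=H\supseteq W$. Finally $F$ is an $mp$-subgraph of $G$ as well, because the clique $Q$ blocks any prime extension of $F$ across the separator $Q$. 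Combining the two inequalities gives $\omega=c(G)$. I expect the delicate points to be the two appeals to the clique separator decomposition, namely that $N(V(H))$ is a clique minimal separator and that the atom containing $Q$ reaches into $W$, both of which rest on the machinery behind Theorem~\ref{thm:41} and~\cite{Leimer1993}.
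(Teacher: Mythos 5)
Your proof is correct, but for the harder inequality it takes a genuinely different route from the paper. Writing $\omega$ for the $t$-convexity number: for $c(G)\le\omega$ both arguments verify the two conditions of Theorem~\ref{thm:tp-convexset} for $F(C)$ via Theorem~\ref{thm:atom-convex}, but you are more careful than the paper, which applies Theorem~\ref{thm:atom-convex} as if it also bounded the number of $C$-neighbours of vertices \emph{outside} $V(F)$, whereas you handle those vertices explicitly through the clique minimal separator $N(V(H))\cap V(F)$. For $\omega\le c(G)$ the paper picks an $mp$-subgraph $F$ with $2\le|S\cap V(F)|<|V(F)|$, proves by a $D$-ordering case analysis that such an $F$ is unique, and concludes $S\subseteq F(C)$; you instead first normalize $S$ so that $W=V(G)\setminus S$ is connected, observe that $Q=N(W)$ is a clique each of whose vertices sees $W$ while each vertex of $W$ sees at most one vertex of $Q$, and then manufacture the required atom inside $G[W\cup Q]$, obtaining the exact equality $F(C)=S$. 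Your construction buys two things: it sidesteps the paper's existence claim for an $F$ with $2\le|S\cap V(F)|<|V(F)|$ (which can actually fail, e.g.\ for the star $K_{1,3}$, where the maximum proper $t$-convex set meets every $mp$-subgraph either in one vertex or entirely), and it replaces the uniqueness argument by a direct identification of the witnessing pair $(F,C)$. The price is a heavier reliance on the machinery of \cite{Leimer1993}: you need that $N(D)$ is a clique minimal separator for every component $D$ of $G$ minus an atom, that every atom of a connected reducible graph contains some member of ${\cal R}(H)$ (which, being a clique relative minimal separator, cannot lie inside $Q$ because $W$ is connected and dominates $Q$, and hence meets $W$), and that the atom of $G[W\cup Q]$ containing $Q$ is also an atom of $G$; all three claims are true and your justifications are sound, but they should be stated and proved as explicit lemmas in a full write-up.
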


\begin{proof}
Let $F$ be an $mp$-subgraph of a graph $G$ and $C$ a $t$-convex set of $F$ such that $c(G) = |F(C)|$.
We have to show that $S = F(C)$ is a $t$-convex set of $G$.
Observe that, by the construction of $S$, the only vertices of $S$ that can have neighbours outside $S$ are the vertices in $C$. Since $C$ is a $t$-convex set of $F$, by Theorem~\ref{thm:atom-convex}, $C$ is a clique. Therefore, there are no two non-adjacent vertices of $S$ having neighbours in a same connected component of $G - S$.
Theorem~\ref{thm:atom-convex} also says that every vertex outside $C$ has at most one neighbour in $C$. Then, using Theorem~\ref{thm:tp-convexset}, we conclude that $S$ is a $t$-convex set of $G$.

Conversely, let $S$ be a maximum $t$-convex set of $G$. We will show that $c(G) \geq |S|$.
If $|S| = 1$, it is trivial. Then, assume $|S| \geq 2$.
Since $S \subset V(G)$ and $G$ is connected, there is an $mp$-subgraph $F$ of $G$ such that $|2| \leq |S \cap V(F)| < |V(F)|$.
Denote $C = S \cap V(F)$. By Lemma~\ref{lem:SF}, $C$ is a $t$-convex set of $F$.

We claim that there is no other $mp$-subgraph with this property. Then, suppose for contradiction that $F'$ is an $mp$-subgraph different of $F$ such that $|2| \leq |S \cap V(F')| < |V(F')|$.
By Theorem~\ref{thm:41}, 
we can consider $F = F_i$, $F' = F_j$, $i > j$. Then $F$ and $F'$ belong to distinct $B$-components of $G$, for $B = R_i$.
Furthermore, $B \cap S \neq \varnothing$.
If $B \setminus S = \varnothing$, the union of $S$ with the vertices of the $B$-component containing $F'$ that are not in $S$ would be a $t$-convex set of $G$ with more vertices than $S$.
Therefore $B \setminus S \neq \varnothing$. This implies that $|B \cap S| = 1$, because the fact that $S$ is $t$-convex implies that if $S$ contains two vertices of some clique, it contains all vertices of that clique. Write $\{w\} = B \cap S$.
By the choice of $B$, $w$ has neighbours outside $F$. Let $F''$ be the $mp$-subgraph containing $w$ that appears in the same $B$-component that $F'$.
Observe that $F''$ satisfies $|2| \leq |S \cap V(F'')| < |V(F'')|$ because, otherwise, the union of $S$ with the vertices of the $B$-component containing $F'$ that are not in $S$ would be a $t$-convex set of $G$ with more vertices than $S$.
This also implies that there is $u \in (B \setminus S) \cap V(F'')$.
Denote $C'' = S \cap V(F'')$.

Now, observe that, by the choice of $F$ and $F''$, there is a vertex $v \in C \setminus V(F'')$ and a vertex $v'' \in C'' \setminus V(F)$, both different of $w$.
Since $F$ and $F''$ are $mp$-subgraphs, there is a triangle path $P$ from $v$ to $u$ in $F - (C \setminus \{v\})$ and a triangle path $P''$ from $v''$ to $u$ in $F'' - (C \setminus \{v''\})$. It is clear that these two paths have only $u$ in common and that their concatenation form a triangle path joining two non-adjacent vertices of $S$ containing vertices outside $S$, which is a contradiction, then the claim holds.

To conclude the proof it suffices to observe that $S$ is a subset of $F(C)$.
\end{proof}

The algorithm for determining the $t$-convexity number of a general graph, Algorithm~\ref{alg:tconvexitynumber}, follows directly from Theorem~\ref{thm:convexity}. 

\begin{algorithm}[h]

\SetKwInOut{Input}{input}
\SetKwInOut{Output}{output}

\Input{A graph $G$}

${\cal F} \leftarrow$ find the family of $mp$-subgraphs of $G$

$c(G) \leftarrow 1$

\For{$F \in {\cal F}$}{

    ${\cal C}_t \leftarrow$ find the $t$-convex sets of $F$
    
    \For{$C \in {\cal C}_t$}{

         \If{$F(C) > c(G)$} {
    
              $c(G) \leftarrow F(C)$
         }
    }
}

\Return $c(G)$

\caption{$t$-Convexity number \label{alg:tconvexitynumber}}

\end{algorithm}

The computational complexity of Algorithm~\ref{alg:tconvexitynumber} is discussed in the sequel.
Line 1 can be done in $O(nm)$ time~\cite{Leimer1993} and the size of ${\cal F}(G)$ is smaller than $n$~\cite{Leimer1993,Tarjan1985}. Then, the loop of lines 3 to 7 is executed at most $n-1$ times. Line 4 has time complexity $O(n^2m)$ using Algorithm~\ref{alg:convexity-atom}.
Since the number of non-trivial $t$-convex sets of a prime graph is less than $m$, te loop of line 5 is executed $O(m)$ times.
Since line 6 can be done in time $O(m)$, the total complexity is $O(n^3m + nm^2) = O(nm^2)$.

\section{Hull number} \label{sec:hullnumber}

In this section we present a characterization of hull sets of a graph in the triangle path convexity and show how to use it for finding a minimum hull set of the graph. We begin considering the case where the graph is prime.

\begin{corollary}
The $t$-hull number of a non-trivial prime is two.
\end{corollary}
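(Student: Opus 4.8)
The plan is to bound the $t$-hull number from both sides and conclude it equals two. For the lower bound, I would note that every singleton is $t$-convex: since $[\{u\}]_t = \{u\}$ by the convention for sets of size less than two, we have $\langle \{u\} \rangle_t = \{u\}$. As a non-trivial prime has at least two vertices, no singleton is a hull set, so the $t$-hull number is at least two.

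For the upper bound I would produce a hull set of size two, distinguishing two cases according to whether $G$ is complete. If $G$ is not complete, it contains two non-adjacent vertices $u,v$, and Corollary~\ref{cor:hs-atom-tp} gives $\langle \{u,v\} \rangle_t = V(G)$ directly; hence $\{u,v\}$ is a hull set and the $t$-hull number is at most two.

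The remaining possibility is that $G$ is complete, which is still a prime graph since a complete graph has no separator and thus no clique separator. Here Corollary~\ref{cor:hs-atom-tp} is unavailable, so I would invoke Lemma~\ref{lem:car-prime} instead. Choosing any two vertices $u,v$, the set $\{u,v\}$ is a clique, so the lemma yields $[\{u,v\}]_t = [\{u,v\}]_{P_3} = \{u,v\} \cup \{w : |N(w) \cap \{u,v\}| \geq 2\}$. In a complete graph every other vertex is adjacent to both $u$ and $v$, whence this set equals $V(G)$ and $\langle \{u,v\} \rangle_t = V(G)$; again the $t$-hull number is at most two.

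The main obstacle is recognizing that the complete graph satisfies the hypothesis and must be handled separately, precisely because Corollary~\ref{cor:hs-atom-tp} only supplies hull sets from non-adjacent pairs and a complete graph has none. Once the complete case is settled through Lemma~\ref{lem:car-prime}, combining the two bounds shows the $t$-hull number of a non-trivial prime is exactly two.
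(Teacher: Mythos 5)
Your proposal is correct and follows essentially the same route as the paper: a case split between complete and non-complete prime graphs, with Corollary~\ref{cor:hs-atom-tp} handling the non-complete case. You merely spell out the complete case (via Lemma~\ref{lem:car-prime}) and the trivial lower bound, both of which the paper leaves implicit.
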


\begin{proof}
If $G$ is a complete graph, any two vertices form a $t$-hull set of the graph.
Otherwise, the result follows from Corollary~\ref{cor:hs-atom-tp}.
\end{proof}

Consider an $mp$-subgraph $F$ of a graph $G$ and a set $S$ of vertices of $G$. We say that $v \in V(F)$ is a {\em pivot of $F$} if there is an $mp$-subgraph $F'$ of $G$ such that $v \in V(F')$ and there is a vertex of $S \setminus V(F)$ in the $(V(F) \cap V(F'))$-component containing $F'$. 
We say that $S$ {\em satisfies $F$} if at least one of the following conditions holds:

\begin{enumerate}[{\em Condition} $1$:]
  \item There are two pivots in $F$ forming a $t$-hull set of $F$;

  \item There is a pivot $u \in V(F)$ contained in an $mp$-subgraph $F'$ and a vertex $v \in S \cap (V(F) \setminus V(F'))$ such that $\{u,v\}$ is a $t$-hull set of $F$;

  \item $S \cap V(F)$ is a $t$-hull set of $F$.
\end{enumerate} 

The next result characterizes the $t$-hull sets of a graph.

\begin{theorem} \label{thm:satisfies}
A set of vertices $S$ is a $t$-hull set of a reducible graph $G$ if and only if $|S| \geq 2$ and $S$ satisfies all $mp$-subgraphs of $G$.
\end{theorem}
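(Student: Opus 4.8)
The plan is to prove both directions by relating the $t$-convex hull of $S$ in $G$ to the $t$-convex hulls of its traces on each $mp$-subgraph, exploiting that the $R_i$ sets from a $D$-ordering are clique separators (Theorem~\ref{thm:41}) and that on prime graphs the hull is governed by Corollary~\ref{cor:hs-atom-tp} and Lemma~\ref{lem:car-prime}. The first observation I would record is that by Corollary~\ref{cor:hs-atom-tp} any pair of non-adjacent vertices already hulls a prime graph, so a single vertex can never be a $t$-hull set of a non-trivial graph; this justifies the requirement $|S| \geq 2$ and lets me assume $|S| \geq 2$ throughout the rest of the argument.

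\textbf{Necessity.} Assuming $\langle S \rangle_t = V(G)$, I would fix an arbitrary $mp$-subgraph $F$ and show $S$ satisfies $F$. The key engine is the convex-hull computation of Theorem~\ref{thm:alg_mconvexhull}, which builds $\langle S \rangle_t$ by repeatedly absorbing triangle paths; by Lemma~\ref{lem:SF} every $t$-convex set of $G$ restricts to a $t$-convex set of $F$, so the restriction of the growing hull to $V(F)$ can only enter $V(F)$ through vertices that are either in $S \cap V(F)$ or are endpoints in $V(F)$ of triangle paths whose interior lies outside $F$. Because $R(F \cap \cdot)$ separators are cliques, such an external path meets $V(F)$ only in the separating clique, so the ``new'' information reaching $F$ from outside arrives precisely at pivots. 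I would therefore argue that for $F$ to become entirely hulled, either $S \cap V(F)$ alone already hulls $F$ (Condition~3), or the hull of $F$ is driven by one or two vertices fed in from outside along clique separators, which are exactly pivots — giving Conditions~1 and~2 depending on whether both driving vertices are pivots or one is a genuine member of $S \cap V(F)$ lying outside the relevant neighbouring $mp$-subgraph.

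\textbf{Sufficiency.} For the converse I would induct on the number of $mp$-subgraphs using a $D$-ordering $F_1, \ldots, F_t$, where the top piece $F_t$ attaches to the rest along the clique separator $R_t$. Assuming $S$ satisfies every $mp$-subgraph, the induction hypothesis gives $\langle S' \rangle_t = V(G')$ for the graph $G'$ obtained by deleting $V(F_t) \setminus R_t$, with $S'$ the induced trace; then I must show the hull swallows $V(F_t)$ as well. Whichever of the three conditions holds for $F_t$ supplies two vertices (two pivots, a pivot plus an inside vertex, or two vertices of $S \cap V(F_t)$) that, by Corollary~\ref{cor:hs-atom-tp} or Lemma~\ref{lem:car-prime}, hull $F_t$ — the pivots being already in the hull because the component on the other side of their clique separator contains a vertex of $S$ and is hulled by induction.

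\textbf{Main obstacle.} I expect the delicate point to be the necessity direction: verifying that the only way the hull can penetrate an $mp$-subgraph from the outside is through clique-separator vertices acting as pivots, i.e.\ that no triangle path can smuggle interior vertices of $F$ into the hull without first placing two ``anchor'' vertices of $F$ in it. This rests on the clique-separator structure (Theorem~\ref{thm:41}) together with the $P_3$- and $m$-convexity characterizations behind Theorem~\ref{thm:atom-convex}, and getting the bookkeeping right so that exactly Conditions~1–3 — and no spurious fourth case — emerge will be where the real work lies.
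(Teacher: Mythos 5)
Your overall strategy (reduce to the clique-separator structure of Theorem~\ref{thm:41}, use Corollary~\ref{cor:hs-atom-tp} and Lemma~\ref{lem:car-prime} inside each $mp$-subgraph, and let information enter an $mp$-subgraph only through separator vertices) is the right one and matches the paper's in spirit, but both directions as you sketch them have genuine gaps. In the sufficiency direction, your induction on a $D$-ordering that deletes $V(F_t)\setminus R_t$ does not go through: the hypothesis you would need is that the trace $S'=S\cap V(G')$ still satisfies every $mp$-subgraph of $G'$, and this can fail, because a vertex $u$ of some earlier $F_j$ may be a pivot only by virtue of a vertex of $S$ sitting in $V(F_t)\setminus R_t$. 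In a chain $F_1-F_2-F_3$ with $S=\{a,b\}$, $a\in V(F_1)$, $b\in V(F_3)\setminus R_3$, and $F_2$ satisfied only via Condition~1, deleting $F_3$ leaves $S'=\{a\}$, which is not even of size two. The paper avoids this by arguing entirely inside $G$: for an $mp$-subgraph $F$ satisfied only by Condition~1 whose pivots all lie in a single neighbouring $mp$-subgraph $F_1$, it follows the chain $F=F_0,F_1,\ldots,F_k$ until it reaches an $F_k$ already contained in the hull, and then propagates back. Your induction would need to be repaired, e.g.\ by replacing the deleted side with a proxy vertex on the separator, and that repair is essentially the paper's chain argument in disguise.

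In the necessity direction you correctly identify the mechanism but explicitly defer the case analysis (``no spurious fourth case''), which is where the actual content of the theorem lies; as written this is a plan, not a proof. Also, your supporting claim that ``such an external path meets $V(F)$ only in the separating clique'' is false as stated: a triangle path from $w\in S\setminus V(F)$ into $F$ can contain many vertices of $F$, and only its \emph{first} vertex in $V(F)$ is forced to lie in a separator $V(F)\cap V(F')$. The weaker, correct statement is what the paper uses (it chooses $u\in V(F)\cap V(F')$ as the entry vertex and then runs a case analysis on $D=S\cap V(F)$: if $D\neq\varnothing$ it shrinks $D$ to a single vertex adjacent to all pivots and forces two non-adjacent pivots, i.e.\ Condition~1; if $D=\varnothing$ it shows the union of the separator cliques $C_i=V(F)\cap V(F_i)$ over the $mp$-subgraphs containing pivots would have to hull $F$). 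That bookkeeping is the theorem; without it the proposal does not establish the statement.
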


\begin{proof}
Let $G$ be a reducible graph. First, consider a $t$-hull set $S$ of $G$. It is clear that $|S| \geq 2$. Suppose for contradiction that $F$ is an $mp$-subgraph of $G$
such that none of the three conditions are satisfied in $F$ by $S$. Denote $D = S \cap V(F)$.

If $D \neq \varnothing$, then since $S$ does not satisfy Condition 3 in $F$, $D$ is not a $t$-hull set of $F$ and, by Corollary~\ref{cor:hs-atom-tp}, $\langle D\rangle_t^F$ is a clique. But, since $S$ is a $t$-hull set of $G$, exists one vertex $w \in S \setminus V(F)$ such that there is a triangle path from $w$ to some vertex in $\langle D\rangle_t^G$ containing a vertex $u \in V(F) \setminus \langle D\rangle_t^G$.
It is clear that we choose $u$ in $V(F) \cap V(F')$ for some $mp$-subgraph $F'$, i.e., $u$ is a pivot of $F$.
If $u$ is not adjacent to some vertex $v$ of $D$, then $\{u,v\}$ would be a $t$-hull set of $F$. Which would mean that Condition 2 is satisfied in $F$.
Then, $u$ is adjacent to all vertices of $D$. This implies that $|D| = 1$, because otherwise $u$ would belong to $\langle D \rangle_t^F$.
Write $D = \{v\}$. We know that $\{u,v\}$ is not a $t$-hull set of $F$, because Condition 2 is not satisfied in $F$ by $S$. Then, $F$ contains at least two pivots. Further, we can say that $v$ is adjacent to all pivots of $F$. Denote by $D'$ the set formed by $v$ union all pivots of $F$. Since $D'$ is not a $t$-hull set of $F$, every two pivots of $F$ are adjacent. But, observe that this implies that $S$ is not a $t$-hull set of $G$. Then, $D'$ is a $t$-hull set, which would mean that $F$ contains two non-adjacent pivots and that $S$ satisfies Condition 1.
Then, $D = \varnothing$.

Let $F_1, \ldots, F_k$ be the $mp$-subgraphs of $G$ containing pivots of $F$. Denote $C_i = V(F) \cap V(F_i)$ for $i \in k$ and $C' = \underset{i \in [k]}{\bigcup} C_i$. Observe that, since $S$ is a $t$-hull set of $G$, we have $k \geq 2$. Further, the fact that $S$ does not satisfy Condition $1$ in $F$ implies that no $C_i$ is a $t$-hull set of $F$ for $i \in [k]$.
his also implies that $C'$ is not a $t$-hull set of $F$. Observe that this means that $S$ is not a $t$-hull set of $G$, a contradiction.

Conversely, consider a set $S \subseteq V(G)$ satisfying all $mp$-subgraphs of $G$ with $|S| \geq 2$.
We show that, for every $mp$-subgraph $F$ of $G$, it holds $V(F) \subseteq \langle S \rangle_t^G$.
If $S$ satisfies Condition~3 in $F$, it is clear that $V(F) \subseteq \hs_t^G$.

Now, suppose that $S$ satisfies Condition~2 in $F$. Let $u$ a pivot of $F$, $v \in S \cap V(F) \setminus V(F')$, where $F'$ is an $mp$-subgraph different of $F$ containing $u$, and $w \in S \setminus V(F')$ a vertex in the $(V(F) \setminus V(F'))$-component containing $F'$. Observe that there is a triangle path $P$ in $G$, joining $v$ to $w$, passing through $u$. Since $\{u,v\}$ is a $t$-hull set of $F$, we have that $V(F) \subseteq \hs_t^G$.

Finally, it remains the case in which $S$ satisfies only Condition 1 in $F$.
If $F$ contains two pivots $u_1$ and $u_2$ and there are two distinct $mp$-subgraphs $F_1$ and $F_2$ different of $F$ such that $u_1 \in V(F_1) \setminus(F_2)$ and $u_2 \in V(F_2) \setminus (F_1)$. Then, it is clear that $u,v \in [S]_t^G$, which implies that $V(F) \subseteq \hs_t^G$.
Then, there is an $mp$-subgraph $F_1$ different of $F$ containing all pivots of $F$.

This implies that all vertices of $S$ are in the $(V(F) \cap V(F_1))$-component containing $F_1$.
If $S$ satisfies Condition $j_1$ in $F_1$, for $j_1 \neq 1$, or there are at least two other $mp$-subgraphs containing the pivots of $F_1$, we are done. 
Therefore, suppose that $S$ satisfies only Condition 1 in $F_1$ and exists an $mp$-subgraph $F_2$ different of $F_1$ containing all pivots of $F$. Observe that the vertices of $V(F) \cap V(F_1)$ are not pivots of $F_1$.
Repeating this analysis in $F_2$ and so on, we will obtain a sequence
$F_0, F_1, \ldots, F_k$ of $mp$-subgraphs, for some $k \geq 2$, where $F_0 = F$ such that $S$ satisfies only Condition 1 in $F_j$, all pivots of $F_j$ are in $F_{j+1}$, for $0 \leq j < k$, and $V(F_k) \subset \langle S \rangle_t^G$.
Now, it is easy to see that $V(F) \subset \langle S \rangle_t^G$, concluding the proof.
\end{proof}



The algorithm for finding a minimum $t$-hull set of a general graph is given below.

\begin{algorithm}[h]

\SetKwInOut{Input}{input}
\SetKwInOut{Output}{output}

\Input{A connected reducible $G$}

$S \leftarrow \varnothing$

${\cal F} \leftarrow D$-ordered family of the $mp$-subgraphs of $G$

${\cal R} \leftarrow \{R_i = V(F_i) \cap ( V(F_1) \cup \ldots \cup V(F_{i-1}) ) : i \in [t], t = |{\cal F}|\}$

\For{$i = t$ \KwTo $2$} {

    $P \leftarrow$ pivots of $F_i$

    \If{$P \cup R_i$ is not a $t$-hull set of $F_i$} {
            let $v \in V(F_i) \setminus \langle P \cup R_i \rangle_t^{F_i}$ such that $\langle P \cup R_i \cup \{v\} \rangle_t^{F_i} = V(F_i)$
        
            $S \leftarrow S \cup \{v\}$
    }
}

\If{$V(F_1) \nsubseteq \langle S \rangle_t^G$} {
   
    \If{ exists $v \in V(F_1)$ such that $S \cup \{v\}$ is a $t$-hull set of $F_1$} {
        
            $S \leftarrow S \cup \{v\}$
    }
    \Else {
            let $u,v \in V(F_1)$ be a $t$-hull set of $F_1$
        
            $S \leftarrow S \cup \{u,v\}$
    }
}
\Return $S$

\caption{Hull number in the triangle path convexity \label{alg:tphn}}

\end{algorithm}

Now, we discuss the time complexity of Algorithm~\ref{alg:tphn}.
Let $G$ be a graph of order $n$ and size $m$.
Using the algorithm of~\cite{Leimer1993}, we can perform lines 2 and 3 in $O(nm)$ time. 
Since the number of $mp$-subgraphs is less than $n$~\cite{Leimer1993,Tarjan1985}, the number of iterations of the loop is $O(n)$.

Line 5 can be done in $O(m)$ time as follows. Define $G_i$ as the graph obtained by adding a vertex $v_i$ to $G$ adjacent to all vertices of $F_i$. Then, obtain a tree $T_i$ performing a breadth-first search rooted in $v_i$. Next, set $S_i = S$ and, for every vertex $u$ of $S_i$ not chosen yet, add to $S_i$ the neighbours of $u$ belonging to depth $d-1$ if $d$ is the depth of $u$ in $T$. It is easy to see that the set of pivots of $F_i$ is $S_i \cap R(G)$.

By Corollary~\ref{cor:alg-prime}, line 6 can be tested in $O(n^2)$ steps.
Line 7 can be done in constant $O(n)$. 
Since the remaining lines of the algorithm can clearly be performed in less time, the overall complexity for computing the $t$-hull number of a general graph is $O(n^3)$.


\newpage

\section*{Appendix}

Proof of Theorem~\ref{thm:tp-convexset}.

\begin{proof}
Let $S \subset V(G)$. If there is a vertex $v \not\in S$ such that $v$ has neighbours $u,w \in S$, then $uvw$ is a triangle path not contained in $S$. Then $S$ is not $t$-convex. Now, consider that there exist two non-adjacent vertices $u,v \in S$ such that $u' \in N(u)$, $v' \in N(v)$, and $u',v' \in V(C)$ for some connected component $C$ of $G - S$. If $u' = v'$, we come back to the first case. Then, we can assume that $u' \neq v'$. Consider a minimum path $P$ of $C$ joining $u'$ to $v'$. Since $C$ is a connected component of $G - C$, $P$ is an induced path of $G$. Now, note that $uPv$ is also an induced path, i.e., a triangle path of $G$. Then $S$ is not a $t$-convex set.

Assume now that $S$ is a $t$-convex set. Then, there exists a  
triangle path $u w_1 \ldots w_k v$ joining a pair of vertices $u,v \in S$ such that $k \geq 1$ and $w_i \notin S$ for every $i \in [k]$.
If $uv \in E(G)$, then $k = 1$, and $w_1$ is a vertex outside $S$ having two neihgbours in $S$.
Otherwise, since $P' = w_1 \ldots w_k$ is a path of $G - S$, all vertices of $P'$ belong to a same connected component of $G - S$. Then, $S$ has two vertices, namely, $u$ and $v$, having neighbours in a same connected component of $G - S$.
\end{proof}

Proof of Corollary~\ref{cor:atom}.

\begin{proof}
Let $G$ be a prime graph and suppose for contradiction that $S_1$ and $S_2$ are two $t$-convex sets of $G$ such that $\{u,v\} \subseteq S_1 \cap S_2$. Since $S_1 \neq S_2$, without loss of generality we can say that there is a vertex $w \in S_1 \setminus S_2$. Recall that, by Theorem~\ref{thm:atom-convex}, $S_1$ is a clique. Hence, $w$ is a vertex outside of a $t$-convex set, namely $S_2$, having two neighbours in $S_2$, $u$ and $v$, contradicting Theorem~\ref{thm:atom-convex}.
\end{proof}

Proof of Corollary~\ref{cor:numberprime}.

\begin{proof}
We use induction on $n$. If $n = 1$, it is trivial. Now, let $G$ be a prime graph of order $n$, for some $n \geq 2$, and suppose that every prime graph of order $n' < n$ has less than $n'$ non-trivial $t$-convex sets. If $G$ has no non-trivial convex sets with at least three vertices, we are done. Then, let $C$ be a non-trivial $t$-convex set of $G$ with at least three vertices and denote $G' = G - C$. Since $G$ is prime, $G'$ is also prime. Using the induction hypothesis and the fact that $|C| \geq 3$, we conclude that $G'$ has less than $n - 3$ non-trivial $t$-convex sets with at least three vertices.
Now, Theorem~\ref{thm:atom-convex} and Corollary~\ref{cor:atom} imply that every non-trivial $t$-convex set of $G$ with at least three vertices different of $C$ is also a non-trivial $t$-convex set of $G'$. Concluding the proof.
\end{proof}

Proof of Corollary~\ref{cor:alg-prime}.

\begin{proof}
Let $S$ be a set of vertices of a prime graph $G$ of order $n$. To test whether $S$ is a clique can be done in $O(n^2)$ and to test, for every vertex $u \not\in S$, whether $u$ has two neighbours in $S$ can be done in $O(n)$. Then, by Theorem~\ref{thm:atom-convex}, one can test if $S$ is $t$-convex in $O(n^2)$ time.

For computing the $t$-convex hull of $S$ we compute $S' = S \cup \{u : u \not\in S$ and $|N(u) \cap S| \geq 2 \}$. Next, as observed in the proof of Lemma~\ref{lem:car-prime}, if $S'$ is $t$-convex, then $\langle S \rangle_t = [S]_t$, otherwise $\langle S \rangle_t = V(G)$. It is clear that these tasks can be done in $O(n^2)$ steps.
\end{proof}

\begin{theorem}
Algorithm~{\em \ref{alg:convexity-atom}} is correct.
\end{theorem}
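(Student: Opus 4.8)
The plan is to prove that Algorithm~\ref{alg:convexity-atom} returns exactly the family of all $t$-convex sets of the prime input graph $G$. I would organize this around three obligations: \emph{soundness} (every returned set is $t$-convex), \emph{termination}, and \emph{completeness} (every $t$-convex set is returned). Soundness and termination are the easy parts. The sets placed in $\mathcal{C}$ at initialization, namely $\varnothing$, $V(G)$, and the singletons, are all $t$-convex; and inside the loop a set $S$ is inserted only after the explicit test that $S$ is $t$-convex, which is legitimate by Corollary~\ref{cor:alg-prime}. For termination, I would observe that in each iteration the selected edge $uv$ itself lies in $G[S]$ (since $u,v\in S$), so at least that edge is deleted from $H$; hence $H$ strictly shrinks and the loop halts with $H=\varnothing$.

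For completeness I would first record the central identity: if $C$ is a proper $t$-convex set with $|C|\ge 2$ and $uv\in E(G[C])$, then the set $S=\{u,v\}\cup\{w: u,v\in N(w)\}$ computed in line~5 equals $C$. Indeed, by Theorem~\ref{thm:atom-convex} the set $C$ is a clique, so every $w\in C$ is adjacent to both $u$ and $v$, giving $C\subseteq S$; conversely any common neighbour $w$ of $u$ and $v$ has two neighbours in $C$, so $t$-convexity of $C$ forces $w\in C$, giving $S\subseteq C$. (The dichotomy of Lemma~\ref{lem:car-prime} explains, dually, why a rejecting iteration is exactly one with $\langle\{u,v\}\rangle_t=V(G)$.) Thus whenever an edge of $G[C]$ is actually \emph{selected} in line~4, the set $C$ is computed, passes the test, and is inserted. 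Since initialization already supplies $\varnothing$, $V(G)$, and the singletons, it remains to show that for every proper $t$-convex set $C$ with $|C|\ge 2$, some edge of $G[C]$ gets inserted before all of $G[C]$'s edges vanish from $H$.

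The crux, and the step I expect to be the main obstacle, is to control the edges deleted as a \emph{side effect} in line~8. The key claim I would establish is: if an edge $uv\in E(G[C])$ is deleted while processing some edge $u'v'$ (so that $u,v\in S'$, where $S'=\{u',v'\}\cup\{w:u',v'\in N(w)\}$), then necessarily $S'=C$ and $S'$ is $t$-convex, so $C$ is inserted in that very iteration. To prove this I would split on whether $u',v'$ coincide with $u,v$: any of $u',v'$ not equal to $u$ or $v$ lies in $S'$ only by being adjacent to both $u'$ and $v'$, and a short argument then shows it is adjacent to both $u$ and $v$; being a would-be vertex outside the $t$-convex set $C$ with two neighbours $u,v\in C$, it must by Theorem~\ref{thm:atom-convex} actually lie in $C$ (while any of $u',v'$ equal to $u$ or $v$ trivially lies in $C$). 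Hence $u',v'\in C$, so $u'v'\in E(G[C])$, and the identity above yields $S'=C$, which is $t$-convex; equivalently, once $S'$ is $t$-convex and meets $C$ in the two vertices $u,v$, Corollary~\ref{cor:atom} forces $S'=C$. Consequently the edges of $G[C]$ can disappear from $H$ only in an iteration that simultaneously inserts $C$. Combining this with termination, which guarantees that every edge, in particular every edge of $G[C]$, is eventually removed, shows that $C$ is inserted. This completes completeness and hence establishes the correctness of the algorithm.
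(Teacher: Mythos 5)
Your proof is correct and follows essentially the same route as the paper's: the trivial sets are seeded at initialization, every non-trivial proper $t$-convex set is a clique and hence contains an edge, and the decisive point in both arguments is that any proper $t$-convex set containing two vertices of the set $S$ built in line~5 must coincide with $S$, so the edge deletions of line~8 never discard an edge of an unrecorded convex set. You derive this uniqueness directly from Theorem~\ref{thm:atom-convex} and Corollary~\ref{cor:atom}, where the paper invokes Lemma~\ref{lem:car-prime}; your version spells out the key claim in more detail but is not a genuinely different argument.
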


\begin{proof}
Let $G$ be a prime graph. We need to show that every $t$-convex set of $G$ will be added to ${\cal C}$ exactly once. All trivial $t$-convex sets of $G$ are identified at line 1. Since all non-trivial $t$-convex sets contain at least one edge, it is clear that if we consider every edge $uv \in E(G)$ once, to compute its $t$-convex hull, to check if it is different of $V(G)$, and add it to ${\cal C}$, we will generate all proper $t$-convex sets of $G$. However, it is possible that some of them have been generated more than once.

We claim that, for achieving the unicity, instead of computing the $t$-convex hull of $\{u,v\}$, we compute 
$S \leftarrow \{u,v\} \cup \{w : u,v \in N(w) \}$ (line 5) and after testing whether it is a $t$-convex set we remove all edges of $G[S]$ (line 8).

This clearly avoids that a proper $t$-convex set be generated twice.
It remains to show that it does not yield that some of them be not generated at least once.
Let $xy$ be an edge removed at line 8 in some iteration $i$ of the loop and denote by $S_i$ the set constructed at line 5 of this iteration.
On one hand, $S_i$ is a $t$-convex set of $G$. By Lemma~\ref{lem:car-prime}, there is no proper $t$-convex of $G$ different of $S_i$ containing both $x$ and $y$. Then, $xy$ can be removed from $H$ at this moment. On the other hand, $S_i$ is not a $t$-convex set of $G$. 
By Lemma~\ref{lem:car-prime} again, the only $t$-convex set containing both $x$ and $y$ is $V(G)$. Then, $xy$ can also be removed from $H$ at this moment.
\end{proof}

Next, we present the correctness of the algorithm for finding a minimum $t$-hull set of a general graph.

\begin{theorem}
Algorithm~$\ref{alg:tphn}$ is correct.
\end{theorem}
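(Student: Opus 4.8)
The plan is to establish the two halves of correctness separately: that the returned set $S$ is a $t$-hull set of $G$ (soundness) and that no smaller $t$-hull set exists (optimality). Throughout, the workhorse is Theorem~\ref{thm:satisfies}, which reduces being a $t$-hull set to the local, per-$mp$-subgraph property of \emph{satisfying} each $F_i$, together with the facts that $F_1,\ldots,F_t$ is a $D$-ordering, so that each separator $R_i$ lies inside a single earlier $mp$-subgraph (Theorem~\ref{thm:41}), and that any two non-adjacent vertices form a $t$-hull set of a prime graph (Corollary~\ref{cor:hs-atom-tp}).

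For soundness I would show that the final $S$ satisfies every $mp$-subgraph and that $|S|\ge 2$, and then invoke Theorem~\ref{thm:satisfies}. The subgraph $F_1$ is handled by lines~9--14, which force $S\cap V(F_1)$ to be a $t$-hull set of $F_1$ whenever $V(F_1)\not\subseteq\langle S\rangle_t^G$, i.e.\ Condition~3. For $i\ge 2$, the loop guarantees $\langle P\cup R_i\cup\{v\}\rangle_t^{F_i}=V(F_i)$, where $P$ is the set of pivots of $F_i$ and $v$ is the vertex added at line~8 (if any). I would translate this into one of the three conditions by a case split on whether some vertex of $S$ lies in the component of $G-R_i$ containing $F_1$: if it does, then every vertex of $R_i$ is itself a pivot of $F_i$, so $P\cup R_i$ is a set of pivots and one reads off Condition~1 (when no $v$ was added, using Corollary~\ref{cor:hs-atom-tp} to extract two non-adjacent pivots that already form a hull set) or Condition~2 (when $v\in S\cap V(F_i)$ was added, pairing $v$ with a pivot it is non-adjacent to, the remaining adjacency edge cases reducing to Condition~1 or~3); if it does not, the delicate case where $R_i$ is genuinely needed but contributes no pivots is exactly the situation that the earlier iterations and lines~9--14 preclude, since otherwise the ancestor of $F_i$ could not have been covered. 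Once each $F_i$ is satisfied, $V(F_i)\subseteq\langle S\rangle_t^G$ follows because triangle paths of $F_i$ are triangle paths of $G$.

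For optimality I would argue by induction on the number $t$ of $mp$-subgraphs, peeling the leaf $F_t$ of the decomposition tree (it is a leaf because $R_t$ lies in an earlier subgraph), the base case being a single prime graph, whose $t$-hull number is two. Let $S^{*}$ be a minimum $t$-hull set; by Theorem~\ref{thm:satisfies} it satisfies every $F_i$, so in particular it must spend enough vertices to satisfy $F_t$. Comparing the run of Algorithm~\ref{alg:tphn} on $G$ with its run on $G'=G-(V(F_t)\setminus R_t)$, the two runs agree except for the iteration handling $F_t$ and for the way a vertex possibly added inside $F_t$ feeds the pivots of the parent of $F_t$; controlling this coupling lets me write the output size on $G$ as the output size on $G'$ plus the cost charged to $F_t$. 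The inductive hypothesis yields minimality on $G'$, and the necessity direction of Theorem~\ref{thm:satisfies} forces $S^{*}$ to pay a matching, disjoint cost for $F_t$ because $R_t$ is a clique separator, giving $|S^{*}|\ge|S|$. The local hull-set tests in lines~6--7 and~10--14 are legitimate by Corollary~\ref{cor:alg-prime} and Lemma~\ref{lem:car-prime}.

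I expect the main obstacle to be the two-directional dependency built into the loop: the coverage of an $mp$-subgraph $F_i$ draws on pivots coming from its descendants (already fixed when $F_i$ is processed, since the loop runs in decreasing order) and on $R_i$ coming from its unique ancestor (covered only later). Proving that using $R_i$ as a surrogate for this future coverage never over-counts --- equivalently, that the delicate second case in the soundness argument cannot survive in the final $S$ --- is the technical heart of soundness, and the analogous bookkeeping of how the vertex added inside the leaf $F_t$ interacts with the pivots of its parent is the corresponding difficulty in the optimality induction.
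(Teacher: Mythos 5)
Your soundness half follows the paper's route: show that the output $S$ satisfies every $mp$-subgraph and invoke Theorem~\ref{thm:satisfies}; the case analysis you sketch (whether line~7 fires, whether $S$ has a vertex on the far side of $R_i$, and the three-way split for $F_1$ according to which of lines~10, 11, 14 runs) is essentially the paper's. The optimality half, however, departs from the paper and is where the gap lies. Your plan is an induction that peels the leaf $F_t$, compares the run on $G$ with the run on $G' = G - (V(F_t)\setminus R_t)$, and asserts that the output size and the optimum are both additive across this peeling. That additivity is not established, and it is genuinely problematic: a vertex $v$ added at line~8 inside $F_t$ turns the vertices of $R_t = V(F_p)\cap V(F_t)$ into pivots of the parent $F_p$, so the run on $G$ may skip line~7 for $F_p$ while the run on $G'$ cannot; the totals need not decompose as ``cost on $G'$ plus cost charged to $F_t$.'' Symmetrically, on the lower-bound side, if $S^{*}$ is a minimum $t$-hull set of $G$, its restriction to $V(G')$ need not be a $t$-hull set of $G'$ (a vertex of $S^{*}$ deep inside $F_t$ can be what hulls the rest of $G$, via paths through the clique $R_t$), so the inductive hypothesis cannot be applied to it without a substitution argument that you do not supply. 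You correctly flag both couplings as ``the technical heart,'' but flagging them is not resolving them.

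The paper avoids the induction entirely with a direct packing argument: for each iteration $i$ in which line~7 executes, the set $C_i = V(F_i)\setminus \langle P_{i-1}\cup R_i\rangle_t^{F_i}$ is a nonempty $t$-concave set of $G$ (its complement is $t$-convex), every $t$-hull set must meet every nonempty $t$-concave set, and the $C_i$ are pairwise disjoint; this charges each vertex added in the loop to a distinct concave set and yields the lower bound $|S_2|$ immediately, with no cross-iteration bookkeeping. The vertex possibly added at line~11 is then shown to lie outside $\langle S_2\rangle_t^G$, hence in yet another disjoint concave set. If you want to salvage your write-up, the cleanest fix is to replace the induction by this concave-set packing; otherwise you must prove the two additivity claims, which amounts to redoing that packing argument in a harder form.
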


\begin{proof}
Let $G$ be a graph with $t \geq 2$ $mp$-subgraphs and $S$ the set obtained by the Algorithm~\ref{alg:tphn} runned over $G$.
Note that every $mp$-subgraph of $G$ is considered exactly once by the algorithm, $t-1$ of them in the loop of the algorithm and the remaining one in the Lines 9 to 14. Denote by $S_i, P_i$, and $v_i$ the corresponding instances of $S, P$, and $v$ at the end of iteration $i$, for $2 \leq i \leq t$.

First, we show that $S$ is a $t$-hull set of $G$. For this, we prove that $S$ satisfies every $mp$-subgraph of $G$.
Consider some iteration $i$, $2 \leq i \leq t$, of the loop.
Since the vertex chosen at line 7 is a vertex of $V(F_j) \setminus R_j$, for $2 \leq j \leq t$, and these vertices will no more be considered at line 7 in the future, we conclude that at moment of the execution of line 6 it holds $S_{j-1} \cap V(F_j) = \varnothing$.

Hence, if the algorithm does not run line 7, it means that $S_{i-1}$ satisfies Condition 1 in $F_i$ or
$|\langle S_{i-1} \rangle_t^G \cap R_i| \leq 1 < |R_i|$.
In the latter case, we claim that exists in $S \setminus S_i$ some vertex of some $R_i$-component of $G$ not containing $F_i$ and this will imply that $S$ satisfies Condition 1 in $F_i$.
Consider the case where $|\langle S_2 \rangle_t^G \cap R_i| \leq 1 < |R_i|$ and the algorithm reachs line 9.
Then, some vertex will be added to $S_2$ at line 11 or 14, because $F_1$ is contained in some of the $R_i$-components of $G$ not containing $F_i$ and no vertex of $S_i$ is in any of these $R_i$-components, because otherwise $\langle S_{i-1} \rangle_t^G \cap R_i = R_i$.
Then, let $u \in S \setminus S_i$. Observe that there is triangle path in $G$ from $u$ to some vertex of $S_i$ not passing trough $\langle S_{i-1} \rangle_t^G \cap R_i$. Then, $V(F_i) \subset \hs_t^G$ and $S$ satisfies Condition 1 in $F_i$.

Now, if the algorithm executes line 7, then either $S_i$ satisfies Condition 2 in $F_i$ or, for the same reason of the previous paragraph, there is in $S \setminus S_i$ some vertex of some $R_i$-component of $G$ not containing $F_i$. Consequently, $S$ satisfies Condition 2 in $F_i$.

If line 10 is not executed, then $S_2$ satisfies Condition 1 in $F_1$.
If line 11 is executed, then $S$ satisfies Condition 2 in $F_1$.
If line 14 is executed, then $S$ satisfies Condition 3 in $F_1$.
Then, all $mp$-subgraphs $G$ are satisfied by $S$ and $|S| \geq 2$, then by Theorem~\ref{thm:satisfies}, $S$ is a $t$-hull set of $G$.

It remains to prove that the $t$-hull number of $G$ is at least $|S|$.
For every iteration $i$ of the loop of the algorithm, define $C_i = V(F_i) \setminus \langle P_{i-1} \cup R_i \rangle_t^{F_i}$ if line 7 is executed, otherwise define $C_i = \varnothing$.
Observe that if $C_i \neq \varnothing$, then $C_i$ is a $t$-concave set of $G$. It is clear that every $t$-hull set must contain at least one vertex of each $t$-concave set. Now, observe that if $C_i$ and $C_j$ are both non-empty and $i \neq j$, then $C_i \cap C_j = \varnothing$. Which implies that the $t$-hull number of $G$ is at least $|S_2|$.

Finally, consider the execution of lines 9 to 14.
We observe that line 14 is executed only if $S_2 = \varnothing$.
If line 11 is not executed or if $|S_2| = 1$, we are done.
Then, consider $|S_2| \geq 2$ and let two vertices $v_i$ and $v_j$ chosen in iterations $i$ and $j$, respectively, for $j < i$. Recall that our numbering of iterations is decreasing. Observe that $v_i \in V(F_i) \setminus V(F_j)$ and $v_j \in V(F_j) \setminus V(F_i)$.
By Theorem~\ref{thm:41}, $R_i \subset V(F_i)$ of $G$ such that 
$F_i$ and $F_j$ are contained in different $R_i$-components of $G$.
It is clear that $P_i \cup R_i \subset \langle S_i \cup \{v_j\} \rangle_t^G$.
Then, $V(F_i) \subset \langle S_2 \rangle_t^G$.
This implies that all $t$-concave sets $C_k$ defined in the previous paragraph are contained in $\langle S_2 \rangle_t^G$. Since the vertex chosen in line 11 form a $t$-hull set of $G$, it is chosen outside $\langle S_2 \rangle_t^G$, i.e., of a $t$-concave set of $G$ different of the ones already considered. Which means that the $t$-hull number of $G$ is at least $|S|$.
\end{proof}

\end{document}